\documentclass[11pt,english]{iopart} 

\usepackage[usenames,dvipsnames]{color}
\usepackage[normalem]{ulem}
\usepackage{graphicx}

\expandafter\let\csname equation*\endcsname\relax
\expandafter\let\csname endequation*\endcsname\relax

\usepackage{amsmath}
\usepackage{amssymb,amsthm}
\usepackage{amsfonts}
\usepackage{color}
\usepackage{graphics}
\usepackage{epsfig}
\usepackage{bbm}
\usepackage{pstricks}
\usepackage{subfigure}
\usepackage{bm}
\usepackage{bbold}
\usepackage{cite}

\newtheorem{thm}{Theorem}
\newtheorem{lem}[thm]{Lemma}
\newtheorem{cor}[thm]{Corollary}

\newcommand{\be}{\begin{equation}}
\newcommand{\ee}{\end{equation}}
\newcommand{\eea}{\end{eqnarray}}
\newcommand{\bea}{\begin{eqnarray}}

\newcommand{\mean}[1]{\ensuremath{\langle{#1}\rangle}}

\newcommand{\eins}{\mathbb{1}}

\newcommand{\WW}{\ensuremath{\mathcal{W}}}

\newcommand{\MM}{\ensuremath{\mathcal{M}}}

\newcommand{\ketbra}[1]{\ensuremath{| #1 \rangle \!\langle #1 |}}

\newcommand{\ket}[1]{\ensuremath{|#1\rangle}}

\newcommand{\bra}[1]{\ensuremath{\langle#1|}}

\newcommand{\braket}[2]{\ensuremath{\langle #1|#2\rangle}}

\newcommand{\kommentar}[1]{}

\newcommand{\vr}{\ensuremath{\varrho}}

\newcommand{\forget}[1]{}

\newcommand{\lueq}{\ensuremath{\simeq_{\mathrm{LU}}}} 
\newcommand{\ptr}[1]{\tr_{#1}}                                
\newcommand{\diag}{\ensuremath{\mathop{\mathrm{diag}}}}       
\newcommand{\avg}[1]{\ensuremath{\left\langle#1\right\rangle}}       
\newcommand{\supp}{\ensuremath{\mathop{\mathrm{supp}}}}


\begin{document}


\title{Entanglement and nonclassical properties of hypergraph states}

\author{Otfried G\"uhne$^1$, Mart\'{\i} Cuquet$^2$, Frank E.S. Steinhoff$^1$,\\
Tobias Moroder$^1$, Matteo Rossi$^3$, 
Dagmar Bru\ss$^4$, \\
Barbara Kraus$^2$, and Chiara Macchiavello$^3$}
\address{$^1$Naturwissenschaftlich-Technische Fakult\"at,
Universit\"at Siegen,
Walter-Flex-Stra{\ss}e~3,
57068 Siegen, Germany}

\address{$^2$Institut f\"ur Theoretische Physik, 
Universit\"at Innsbruck, Technikerstra{\ss}e 25, 
\\
6020 Innsbruck, Austria}

\address{$^3$Dipartimento di Fisica and INFN-Sezione di Pavia, 
Via Bassi 6, 27100 Pavia, Italy
}

\address{$^4$Institut f\"ur Theoretische Physik III, 
Heinrich-Heine-Universit\"at D\"usseldorf,
\\
40225 D\"usseldorf, Germany}

\date{\today}


\begin{abstract}
Hypergraph states are multi-qubit states that form a subset of 
the locally maximally entangleable states and a generalization of 
the well-established notion of graph states.  Mathematically, they
can conveniently be described 
by a hypergraph that indicates a possible generation procedure of these 
states; alternatively, they can also be phrased in terms of a non-local 
stabilizer formalism. In this paper, we explore the entanglement properties 
and nonclassical features of hypergraph
states. First, we identify the equivalence classes under local unitary
transformations for up to four qubits, as well as important classes
of five- and six-qubit states, and determine various entanglement properties
of these classes. Second, we present general conditions under which the local 
unitary equivalence of hypergraph states can simply be decided by considering
a finite set of transformations with a clear graph-theoretical interpretation.
Finally, we consider the question whether hypergraph states and their 
correlations can be used to reveal contradictions with classical hidden 
variable theories. We demonstrate that various noncontextuality inequalities
and Bell inequalities can be derived for hypergraph states. 
\end{abstract}




\section{Introduction}

The study of multiparticle entanglement has attracted much attention 
in the last years. From the theoretical side,  multiparticle 
entanglement may be a key element to improve various applications like 
quantum information processing or quantum metrology, or to understand
and simulate physical systems, such as quantum spin chains undergoing 
a quantum phase transition. From the experimental side, the generation 
and certification of the various interesting multiparticle states  
poses tremendous challenges, but it offers the opportunity 
to demonstrate the advances in controlling and manipulating physical 
systems at the quantum level. 

Since the dimension of the Hilbert space grows exponentially with the number
of particles, it has turned out to be fruitful 
to identify families of multiparticle states which allow for a simple 
description with few parameters. Such a simple description may, for 
instance, originate from the symmetries of the states under 
scrutiny. In fact, many cases are known where various symmetries 
allow the solution of problems in quantum information theory that 
are not yet solved in the general case.

One interesting family of multi-qubit states that has attracted a 
lot of attention in the last ten years are the so-called graph 
states \cite{graphs1, hein}.  These states are of importance in various 
applications of quantum information processing, such as 
measurement-based quantum computation and  quantum error correction. 
Apart from their relevance for applications, they have a simple 
description in terms of graphs: Starting from an arbitrary graph 
(that is, a set of vertices with edges connecting them), one can 
generate the graph states from a product state by applying  
entangling quantum operations on the connected vertices. In addition, 
graph states can also be described as eigenstates of a set of 
commuting local observables, the so-called stabilizer. Due to 
their importance and their simple mathematical description, the 
entanglement properties of graph states have intensively been 
studied. For instance, different entanglement classes of graph 
states up to eight qubits have been identified \cite{hein, adan8}, 
their entanglement properties have been discussed 
\cite{VandenNest2004, Ji2010_LU-LC, PhysRevA.84.052319} 
and purification protocols have been provided \cite{graphpuri, Carle2013}. 
The stabilizer formalism for graph states has turned out to be a very useful 
tool to develop Bell inequalities or Kochen-Specker arguments 
\cite{GHZargument,  Mermin93, DP97}. Interestingly, this was 
known already long before the mathematical 
formulation of graph states was given.

A possible generalization of graph states are the locally maximally
entangleable (LME) states.  Their notion goes back to the study for which
multi-qubit states there is a local interaction with local auxiliary systems,
so that after this operation all the auxiliary systems are maximally entangled
with the initial qubits. States with this property are called LME states, and
they have been characterized in Ref.~\cite{Kruszynska2009}. It has turned out 
that they can be
generated similarly to graph states from a product state with simple
interactions (parameterized with a phase $\varphi$), but in this case also
interactions between three or more particles are needed. Mathematically, they
are distinguished by the fact that they still can be described by a stabilizer
of commuting hermitian observables. In contrast to the usual graph states,
however, the stabilizer observables are non-local, and not simple tensor
products of Pauli matrices. The usual
graph states mentioned above belong to the family of LME states. For 
them, only two-qubit interactions with a phase $\varphi=\pi$ are required.
The states with multi-qubit interactions but still a restricted phase 
of $\varphi=\pi$ are called $\pi$-LME states or hypergraph states.
As noted in Refs. \cite{Qu2013_encoding, Rossi2013} these states have 
a simple description 
in terms of more general objects than graphs, the so-called hypergraphs, 
which has motivated the 
name. In a hypergraph, one edge is allowed to connect any number of vertices,
so there are also edges with three, four, or more vertices. Hypergraph states
occur naturally in the analysis of quantum algorithms such as the
Deutsch-Jozsa algorithm and the Grover algorithm \cite{Rossi2013, scripta}.
Despite their simple and elegant description, and in contrast to graph states, hypergraph states can be complex
in the sense of Kolmogorov complexity, so they 
could, for instance, be used for quantum fingerprinting protocols \cite{mora}.
Moreover, entanglement purification protocols for these states have been
developed \cite{Carle2013}.
All these facts support the conjecture that hypergraph states
can be a good test-bed to explore the subtleties of multiparticle entanglement.

In this paper, we investigate the entanglement properties of hypergraph
states. First, we ask for which cases different hypergraphs lead to
locally equivalent hypergraph states. By locally equivalent we mean that
the two states are connected with a local unitary transformation, that is 
a local change of the basis. Consequently, all their entanglement properties
are the same. For graph states, it is already known that different graphs
can lead to locally equivalent states, and a significant effort has been
devoted to the characterization of the equivalence classes \cite{graphs1, adan8, VandenNest2004, Ji2010_LU-LC}. We extend
here this approach to hypergraph states. We first provide all  entanglement classes
for up to four qubits, as well as special classes for up to six qubits. Then, 
we derive some general conditions, under which the quest for local unitary 
equivalence can be simplified by considering local Pauli operations
only. In the second part of the paper, we ask, whether the non-local 
stabilizer formalism of hypergraph states can be used to develop arguments that 
discriminate between quantum physics and classical 
hidden-variable theories. 
First, we show that a similar argument to the one by Greenberger, Horne
and Zeilinger (GHZ) can be developed for the non-local stabilizer. Based 
on this, various novel noncontextuality inequalities and Bell inequalities
can be developed. Finally, we conclude and discuss possible extensions
of our research.

\section{Hypergraph states}
In this section we review the notions of hypergraphs, hypergraph
states and recall some of their basic properties. We stress that 
essentially all of the facts presented in this section have been shown 
before \cite{Kruszynska2009,Carle2013,Qu2013_encoding,Rossi2013, Qu2013_entropic, Qu2013_relationship}. 
However, in order to be self-contained we recall these results here.
Finally, it may be useful that some of our proofs are significantly
shorter than some of the existing proofs in the literature. 

\subsection{Properties of the controlled phase gate}
Before defining hypergraph states it is useful to explain 
some properties of the controlled phase gates, as they play 
a central role in the definition of the states.

We consider a system of $N$ qubits. We denote by $e=\{i_1, ..., i_n\}$  
a subset of $n$ qubits. Then, the controlled phase gate on the set $e$ 
is the unitary transformation given by the matrix
\be
C_e = \eins - 2 \ketbra{1 \cdots 1} 
\ee
In other words, $C_e$ is a diagonal $2^n \times 2^n$ matrix in the 
standard basis with all entries equal to $1$ except from the last one, 
which has the value $-1$. Clearly, this transformation is invariant 
under permutation of the qubits and it fulfills $(C_e)^2=\eins.$ For
a single qubit, $C_{\{i\}} = \sigma_z$ equals a Pauli matrix. Furthermore, 
in order to formulate general formulae later, if  $e=\emptyset$ is an 
empty set we define $C_\emptyset = -1$ as an overall negative sign. 
Finally, we will sometimes use the notation $C_{ijk}$ instead 
of $C_{\{i,j,k\}}.$ 

For our purposes it is important to understand the relation of 
the controlled phase gate to the Pauli matrices. Here and in 
the following, we denote by $X_k, Y_k, Z_k$ the Pauli matrices 
$\sigma_x, \sigma_y, \sigma_z$ acting on the $k$-th qubit. The 
following Lemma presents two simple rules, which are relevant for 
many of the further calculations.

\begin{lem}\label{lem:rules}
If $k \in e$, the following identities hold:
\bea
X_k C_e X_k &=& C_e C_{e\setminus \{k\}} 
\label{rule1}
\\
C_e X_k C_e &=& X_k \otimes C_{e\setminus \{k\}} 
\label{rule2}
\eea
\end{lem}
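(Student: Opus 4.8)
The plan is to strip the controlled-phase gate down to a single projector and then let elementary projector algebra do the work. Writing $\Pi_j = \ketbra{1}$ for the projector onto $\ket{1}$ on the $j$-th qubit, the definition of $C_e$ reads $C_e = \eins - 2\Pi_e$ with $\Pi_e = \prod_{j\in e}\Pi_j$, and likewise $C_{e\setminus \{k\}} = \eins - 2\Pi_{e\setminus \{k\}}$. (This is consistent with the stated convention $C_\emptyset = -1$, since the empty product gives $\Pi_\emptyset = \eins$.) Because the $\Pi_j$ act on distinct qubits they mutually commute, and they satisfy the two facts I would use repeatedly: the nesting relation $\Pi_k \Pi_{e\setminus \{k\}} = \Pi_e$ together with $\Pi_e \Pi_{e\setminus \{k\}} = \Pi_e$, and the single-qubit conjugation $X_k \Pi_k X_k = \eins - \Pi_k$.

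For the first identity I would conjugate directly. Since $X_k$ touches only qubit $k\in e$, it commutes with every factor of $\Pi_{e\setminus \{k\}}$, so $X_k \Pi_e X_k = (X_k \Pi_k X_k)\Pi_{e\setminus \{k\}} = (\eins - \Pi_k)\Pi_{e\setminus \{k\}} = \Pi_{e\setminus \{k\}} - \Pi_e$. Substituting into $C_e = \eins - 2\Pi_e$ gives $X_k C_e X_k = \eins - 2\Pi_{e\setminus \{k\}} + 2\Pi_e$. Expanding $C_e C_{e\setminus \{k\}} = (\eins - 2\Pi_e)(\eins - 2\Pi_{e\setminus \{k\}})$ and using $\Pi_e \Pi_{e\setminus \{k\}} = \Pi_e$ yields the same expression, which establishes (\ref{rule1}).

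For the second identity I would expand $C_e X_k C_e = X_k - 2\Pi_e X_k - 2 X_k \Pi_e + 4\Pi_e X_k \Pi_e$. The key simplification is that the sandwich term vanishes: factoring out the commuting block $\Pi_{e\setminus \{k\}}$ reduces it to $\Pi_k X_k \Pi_k = 0$, since $\bra{1}X\ket{1}=0$. The two cross terms combine through $\Pi_k X_k + X_k \Pi_k = X_k$ on qubit $k$, so that $\Pi_e X_k + X_k \Pi_e = X_k \Pi_{e\setminus \{k\}}$, leaving $C_e X_k C_e = X_k - 2 X_k \Pi_{e\setminus \{k\}} = X_k(\eins - 2\Pi_{e\setminus \{k\}})$. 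Finally I would observe that $X_k$ acts on qubit $k$ whereas $\eins - 2\Pi_{e\setminus \{k\}} = C_{e\setminus \{k\}}$ acts only on the qubits of $e\setminus \{k\}$, so the product factorizes into the tensor product $X_k \otimes C_{e\setminus \{k\}}$ asserted in (\ref{rule2}).

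The only real subtlety, and the point I would be most careful about, is bookkeeping of supports: knowing precisely which operators act on qubit $k$ and which on $e\setminus \{k\}$ is what justifies both the commutation steps and the final rewriting of a plain operator product as a genuine tensor product. As an independent cross-check I would verify both identities through their action on a computational basis state $\ket{x}$, $x\in\{0,1\}^N$. Both sides of (\ref{rule1}) multiply $\ket{x}$ by the sign $(-1)^{(1-x_k)\prod_{j\in e\setminus \{k\}}x_j}$ (using $x_k+1\equiv 1-x_k \bmod 2$), and for (\ref{rule2}) the two conjugating sign factors collapse via $x_k + (1-x_k) = 1$ to the single sign of $C_{e\setminus \{k\}}$, while the residual bit flip on $k$ reproduces $X_k$. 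This parallel check requires no operator algebra and would serve as a safeguard against sign or support errors.
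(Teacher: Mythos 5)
Your proof is correct, and it is worth noting how it differs from the paper's. The paper proves (\ref{rule1}) by writing $C_e$ in controlled-gate block form, $C_e=\ketbra{0}\otimes \eins + \ketbra{1}\otimes C_{e\setminus\{k\}}$ with qubit $k$ as control, noting that conjugation by $X_k$ swaps the two blocks, and identifying the result with $C_e(\eins\otimes C_{e\setminus\{k\}})$; it then gets (\ref{rule2}) in one line by bootstrapping from (\ref{rule1}) via the involutions $(X_k)^2=(C_e)^2=\eins$, namely $C_eX_kC_e = C_e(X_kC_eX_k)X_k = C_eC_eC_{e\setminus\{k\}}X_k = X_k\otimes C_{e\setminus\{k\}}$. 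You instead stay with the defining formula $C_e=\eins-2\Pi_e$ and prove both identities independently from the projector relations $\Pi_e=\Pi_k\Pi_{e\setminus\{k\}}$, $X_k\Pi_kX_k=\eins-\Pi_k$, $\Pi_kX_k\Pi_k=0$, and $\Pi_kX_k+X_k\Pi_k=X_k$, all of which you use correctly. What the paper's route buys is brevity for the second rule (it is literally a corollary of the first); what yours buys is uniformity --- no preferred ordering of the qubits in $e$ is needed, the degenerate case $e=\{k\}$ is handled automatically through $\Pi_\emptyset=\eins$, $C_\emptyset=-\eins$, and the computational-basis sign check at the end gives an independent, fully elementary verification that guards against exactly the kind of support and sign bookkeeping errors you flag as the main subtlety. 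Both arguments are sound and of comparable length; yours is arguably closer in spirit to how the same rules are later used in the paper (e.g.\ in Lemma~2), since it manipulates $C_e$ as a diagonal sign operator rather than as a controlled gate.
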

In the following, we will refer to Eq.~(\ref{rule1}) as the 
{\it first rule} and to Eq.~(\ref{rule2}) as the {\it second 
rule}.
\begin{proof}
For Eq.~(\ref{rule1}) we can assume that $e=\{1,2,3,...,n\}$
and $k=1$. Then we have 
$C_e=\ketbra{0}\otimes \eins + \ketbra{1} \otimes C_{e\setminus \{k\}} $
and it follows that 
\begin{align}
X_k C_e X_k 
&= 
\ketbra{1}\otimes \eins + \ketbra{0} \otimes C_{e\setminus \{k\}}
=
\eins - 2 \ketbra{0 1 \cdots 1}
\nonumber
\\
&=
(\eins - 2 \ketbra{11 \cdots 1}) 
(\eins - 2 \ketbra{01 \cdots 1}- 2 \ketbra{11 \cdots 1})
\nonumber
\\
&= C_e (\eins \otimes C_{e\setminus \{k\}}).
\end{align}
For Eq.~(\ref{rule2}) we can apply the first rule to see that 
$C_e X_k C_e = C_e X_k C_e X_k X_k = C_e C_e C_{e\setminus \{k\}} X_k
=X_k \otimes C_{e\setminus \{k\}}.$
\end{proof}

A more general commutation rule is the following:

\begin{lem}
  Let $e$ be a set of qubits, $K\subseteq e$ a subset of $e$, and
  $\mathcal{P}(K)$ the power set of $K$ (that is, the set of all 
  subsets of $K$, including the empty set).
  Then we have
  \begin{equation}
    C_e \Big( \bigotimes_{k\in K}X_k \Big)
    = \Big( \bigotimes_{k\in K}X_k \Big) 
    \Big(\prod_{f\in\mathcal{P}(K)} C_{e\setminus f}\Big)
    \label{lemma2}
  \end{equation}
Note that for $K = e$ the operator $C_{e\setminus K}=C_\emptyset=-1$ appears,
so a sign occurs. Also, for $K=e=\{k\}$ this is the usual commutator relation
$Z_kX_k = - X_k Z_k$.
\end{lem}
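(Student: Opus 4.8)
The natural approach is induction on the cardinality $m=|K|$, using the first rule~(\ref{rule1}) of Lemma~\ref{lem:rules} as the main tool and exploiting that all controlled-phase gates are diagonal in the standard basis, hence commute with one another. For the base case $m=0$ we have $K=\emptyset$, the tensor product $\bigotimes_{k\in K}X_k$ reduces to $\eins$, and $\mathcal{P}(\emptyset)=\{\emptyset\}$, so the right-hand product is just $C_{e\setminus\emptyset}=C_e$ and both sides equal $C_e$. As the elementary one-qubit step I would first record that the first rule $X_kC_eX_k=C_eC_{e\setminus\{k\}}$, multiplied on the left by $X_k$ and using $X_k^2=\eins$, gives $C_eX_k=X_kC_eC_{e\setminus\{k\}}$, which is exactly Eq.~(\ref{lemma2}) for $K=\{k\}$.

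For the inductive step I pick an element $k_0\in K$ and split $K=K'\cup\{k_0\}$ with $K'=K\setminus\{k_0\}$, writing $\bigotimes_{k\in K}X_k=X_{k_0}\,X_{K'}$ where $X_{K'}=\bigotimes_{k\in K'}X_k$. Applying the one-qubit step to move $C_e$ past $X_{k_0}$ yields $C_e X_{k_0}X_{K'}=X_{k_0}C_eC_{e\setminus\{k_0\}}X_{K'}$. Now I push $X_{K'}$ all the way to the left: the induction hypothesis for the ambient set $e\setminus\{k_0\}$ (legitimate since $K'\subseteq e\setminus\{k_0\}$) gives $C_{e\setminus\{k_0\}}X_{K'}=X_{K'}\prod_{f\in\mathcal{P}(K')}C_{(e\setminus\{k_0\})\setminus f}$, and the hypothesis for the ambient set $e$ gives $C_eX_{K'}=X_{K'}\prod_{f\in\mathcal{P}(K')}C_{e\setminus f}$. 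Since the $C$ factors are diagonal I may freely reorder them, so I arrive at
\begin{equation}
C_e\Big(\bigotimes_{k\in K}X_k\Big)=X_{k_0}X_{K'}\Big(\prod_{f\in\mathcal{P}(K')}C_{e\setminus f}\Big)\Big(\prod_{f\in\mathcal{P}(K')}C_{e\setminus(f\cup\{k_0\})}\Big).
\end{equation}

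The step I expect to carry the real content is the final combinatorial bookkeeping of the power set. Because $k_0\notin K'$, the family $\{f:f\in\mathcal{P}(K')\}$ enumerates precisely the subsets of $K$ that do not contain $k_0$, while $\{f\cup\{k_0\}:f\in\mathcal{P}(K')\}$ enumerates precisely the subsets of $K$ that do contain $k_0$, each exactly once. Hence the two products over $\mathcal{P}(K')$ merge into a single product $\prod_{g\in\mathcal{P}(K)}C_{e\setminus g}$ over the full power set of $K$, and together with $X_{k_0}X_{K'}=\bigotimes_{k\in K}X_k$ this reproduces Eq.~(\ref{lemma2}). The only points requiring care are keeping track of which ambient set each application of the induction hypothesis refers to, and verifying that every subset of $K$ is produced once and only once by this $k_0$-splitting; everything else is forced by the diagonality of the $C_e$ and by $X_k^2=\eins$.
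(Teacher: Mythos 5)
Your proof is correct and follows essentially the same route as the paper's: induction on $|K|$ with the first rule~(\ref{rule1}) supplying the one-qubit step. The paper merely states that ``the induction step is straightforward,'' and your write-up supplies exactly the details it omits (the double application of the induction hypothesis for the ambient sets $e$ and $e\setminus\{k_0\}$, the reordering justified by diagonality, and the merging of the two products over $\mathcal{P}(K')$ into one over $\mathcal{P}(K)$), all of which are sound.
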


\begin{proof}
The proof works by induction in the number of elements in $K.$ If 
$K=\{k\}$ consists of only one element then  Eq.~(\ref{lemma2})
is nothing but a reformulation of Eq.~(\ref{rule1}). The induction 
step is straightforward.
\end{proof}

\subsection{Hypergraphs and hypergraph states}

Let us first explain the notion of hypergraphs. A hypergraph 
$H=\{V,E\}$ consists of a set $V=\{1, \dots , N\}$ of $N$ vertices 
and a set $E$ of edges connecting the vertices. In a usual graph, 
an edge connects only two vertices, that is, any $e \in E$ is of 
the type $e=\{i,j\}$ with $i,j \in V$. For hypergraphs, however, 
also edges connecting any number of vertices are allowed and a 
general edge is just a subset of $V$. 
Note that the set of all possible hypergraphs is significantly 
larger than the set of all graphs: for $N$ vertices, there are $2^{N(N-1)/2}$
standard graphs, but there are $2^{2^N}$ hypergraphs. This implies that
hypergraph states can be complex in the sense of Kolmogorov complexity.
An example of a hypergraph is given in Fig.~\ref{hg-bild-1}.

\begin{figure}[t]
\begin{center}
\includegraphics[width=0.30\textwidth]{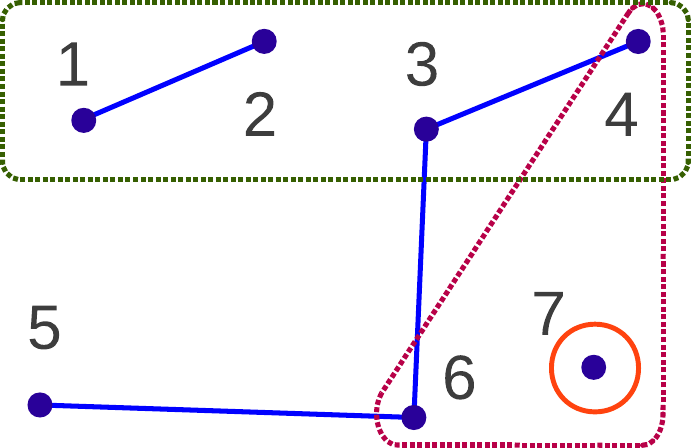}
\end{center}
\caption{Example of a hypergraph with seven vertices. The graph has
one one-edge $\{7\}$, four two-edges $\{1,2\},\{3,4\},\{3,6\},$ and $\{5,6\},$ one three-edge $\{4,6,7\}$, and one four-edge $\{1,2,3,4\}$. 
}
\label{hg-bild-1}
\end{figure}

Some more terminology is in order here. The cardinality $k$ of an 
edge is the number of vertices within this edge; we also speak of 
a $k$-edge. We call a hypergraph {uniform} of cardinality $k$, or 
$k$-uniform, if all edges are $k$-edges. Furthermore, the neighborhood 
of a vertex $i$ consists of all vertices which are connected with $i$ 
via some edge. Finally, we denote by $E(i)$ the set of all edges 
$e \in E$ with $i \in e$.

As for graph states, an $N$-qubit hypergraph state $\ket{H}$ can be 
defined by association to a hypergraph of $N$ vertices via the 
interaction history. Indeed, if one starts with a product state
$\ket{\psi}=\ket{+}^{\otimes N}$ (where $\ket{+}=(\ket{0}+\ket{1})/\sqrt{2}$
is the eigenstate of $X$), and applies all the non-local unitaries 
$C_e$ for all $e \in E$, one arrives at the hypergraph state,
\be
\ket{H} = \prod_{e \in E} C_{e} \ket{+}^{\otimes N}.
\ee
That is, a hypergraph state is associated to a hypergraph, where each
qubit corresponds to a vertex and each interaction $C_e$ is present 
if the hyperedge $e$ belongs to the associated hypergraph.

As in the case of graph states, a hypergraph state can also be defined 
via the stabilizer formalism. Starting from the stabilizer $X_i$ of 
$\ket{+}_i$, and using that $C_e^2=\eins$ and the rules in 
Lemma~\ref{lem:rules}, one can define for each vertex $i$ a
stabilizing operator $g_i$ as
\be
g_i \equiv
\big( \prod_{e\in E} C_e \big) X_i \big( \prod_{e\in E} C_e \big)
=
\big(\prod_{e\in E(i)} C_e \big) X_i \big( \prod_{e\in E(i)} C_e \big)
=
X_i \otimes \big( \prod_{e \in E(i)} C_{e \setminus \{i\} }\big)
\label{gendef}
\ee
{From} this definition one immediately sees that the stabilizers commute, 
since $g_i g_j = \big(\prod_{e \in E} C_{e} \big) X_i X_j \big( \prod_{e \in E} C_{e}\big)
=
\left( \prod_{e \in E} C_{e} \right) X_j X_i \left( \prod_{e \in E} C_{e}\right)
=
g_j g_i
$.
The $g_i$ are traceless and form a maximal set of $N$ commuting observables with 
eigenvalues $\pm 1$, hence there must be a common eigenbasis. More directly, the
hypergraph state $\ket{H}$ can be defined as the common eigenstate to 
all $g_i$ with the eigenvalue $+1$,
\be
g_i \ket{H} = \ket{H} \quad \mbox{ for all } i.
\label{hyperdef}
\ee
The other states in the eigenbasis can be defined by taking other 
possibilities as eigenvalues. The resulting $2^N$ states are all 
orthogonal, and form the so-called hypergraph-state basis, 
which can be generated by successive application of Pauli $Z$ transformations, that is
$\ket{H_{\bm{k}}}\equiv\ket{H_{k_1,\dots,k_N}}=Z_1^{k_1}\otimes\dots\otimes
Z_N^{k_N}\ket{H}$.

The definition of the stabilizing operators can be used to define 
the stabilizer as a group. Let us consider the set of $2^N$ 
observables
\be
\mathcal{S} = \{S_{\bm{x}} | S_{\bm{x}} =\prod_{i \in V} (g_i)^{x_i} 
\mbox{ with } \bm{x}\in\mathbb{Z}_2^N \},
\ee
which consists of all products of the operators $g_i$. $\mathcal{S}$ 
is an Abelian group with $2^N$ elements,  and for any element 
we have $S_{\bm{x}} \ket{H} = \ket{H}$. Furthermore, we can 
express the hypergraph state as
\be
\ketbra{H} = \frac{1}{2^N} \sum_{\bm{x}\in\mathbb{Z}_2^N} S_{\bm{x}} = \prod_{i=1}^{N}\frac{g_i+\eins}{2}.
\ee
These formulas can directly be verified by using the fact that any 
product of the $g_i$ is diagonal in the hypergraph-state basis.
Such a result is, of course, well known for graph states. 
The only difference here is that for hypergraph states the stabilizer contains
also nonlocal observables.

Before proceeding to a general discussion of the properties of hypergraph 
states it is useful to note that the set of edges $E$ can also contain
single vertex edges of the type $e= \{k\},$ but also the empty edge 
$e=\emptyset$. Such contributions will occur in the formulae in
forthcoming discussion. These edges are, however, not relevant for  the 
entanglement properties of the hypergraph state: An edge $e= \{k\}$ 
corresponds to a local unitary transformation $C_k =Z_k$ on the $k$-th
qubit and the empty edge $e=\emptyset$ induces a global sign shift on the
state $\ket{H}$. Both transformations do not change any entanglement 
properties, so we will usually neglect the single-vertex edges and empty
edge in the following. 

As mentioned before, hypergraph states are a special case of the more general
class of LME states.
These states are defined by $2^N$ real phases, and can be prepared by applying 
general phase operators $\tilde{C}_e(\varphi) = \eins - (1-e^{i \varphi})
\ketbra{1 \cdots 1}$ to the state $\ket{+}^{\otimes N}$. As in the 
case of graph states, LME states can be associated to a weighted 
hypergraph, where the weight of each edge corresponds to the phase 
of its associated operator. Setting the phases to $\pi$ 
results in the class hypergraph states or $\pi$-LME states.

\subsection{Local Pauli transformations and graph transformations
\label{sec:graph_trafos}}

In this section we will discuss the possible actions of Pauli matrices
as local unitary transformations on a hypergraph state. First, we have 
mentioned already the action of the unitary transformation $Z_k$ on some
qubit $k$. It adds the edge $e= \{k\}$  to the set $E$ if this edge
was not yet contained in $E$. If, however, the edge $e= \{k\}$
was already in $E$, the transformation $Z_k$ removes this edge
again, since $(C_k)^2=\eins$.

\begin{figure}[t]
\begin{center}
\includegraphics[width=0.60\textwidth]{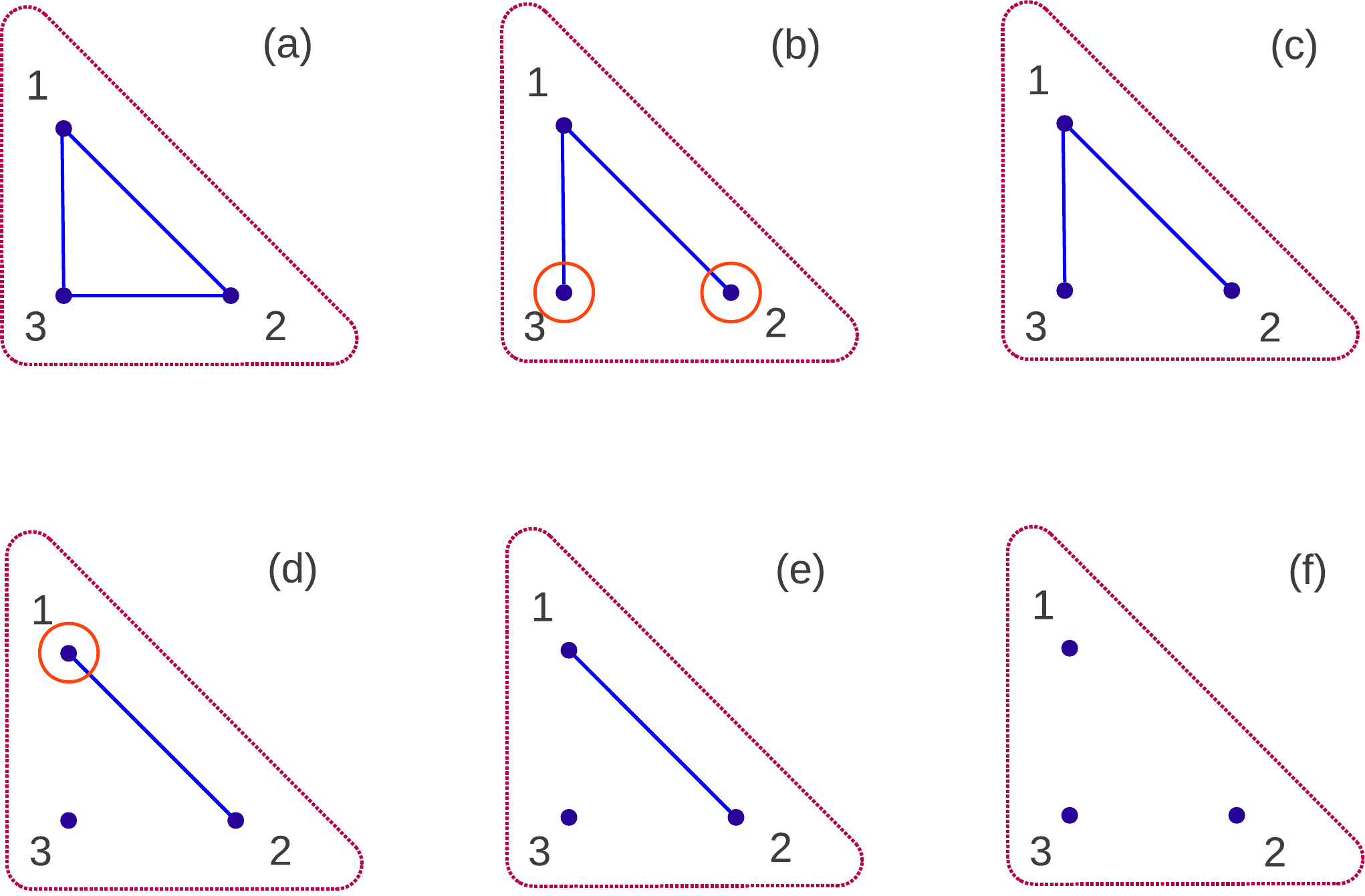}
\end{center}
\caption{Example of the different local Pauli operations and their
action on the hypergraph. 
(a$\rightarrow$b) The operation $X_1$ on the first qubit.
(b$\rightarrow$c) The two one-edges are removed by $Z$ operations
on the qubits 2 and 3.
(c$\rightarrow$d) The operation $X_2$ on the second qubit.
(d$\rightarrow$e) The one-edge is removed by a $Z_1$ operation.
(e$\rightarrow$f) The operation $X_3$ on the third qubit.
}
\label{hg-bild-2}
\end{figure}
A more interesting transformation is the application of the unitary matrix
$X_k.$ We can directly calculate that
\begin{equation}
X_k \ket{H} =
X_k \big( \prod_{e \in E} C_{e} \big) \ket{+}^{\otimes N}
= \big( \prod_{e \in E} C_{e} \big)
\big( \prod_{e \in E(k)} C_{e\setminus \{k\}} \big) \ket{+}^{\otimes N}.
\end{equation}
This is again a hypergraph state. Its edges can be conveniently 
described, if one considers the symmetric difference of sets. 
Generally, the symmetric difference of two sets $A$ and $B$ is given 
by 
\be
A \Delta B = (A\cup B) \setminus (A \cap B),
\ee
that is, both sets are joined, but then the elements which are 
in both sets are removed. Let us define the set of edges
\be
E^{(k)} = \{e \setminus \{k \} | e\in E(k) \}.
\ee
This set is formed by first taking all edges in $E$ which contain
$k$, and then removing $k$ out of all these edges. The set of
edges of the state $X_k \ket{H}$ is then given
by
\be
E^{\rm new} = E \Delta E^{(k)}.
\ee
The hypergraph of the state $X_k \ket{H}$ can directly be determined 
graphically: One picks the vertex $k$ and determines the set $E^{(k)}.$
Then one removes or adds these edges to $E$, depending on whether they 
exist already in $E$ or not. This is demonstrated in an example in 
Fig.~\ref{hg-bild-2}. Finally, the unitary transformation $Y_k$ can 
be implemented by first applying $X_k$ and then $Z_k$.

In the following, we will investigate which hypergraph states are 
equivalent under local unitary transformations, especially local 
Pauli operations. We will show that under certain constraints local unitary
(LU) equivalence is the same as equivalence under local Pauli transformations.
It can be seen already from the present discussion that
some properties of the hypergraph remain invariant under local Pauli 
operators, so they can be used to distinguish equivalence classes. The 
main property concerns the edges with the highest cardinality, that is, the
edges which connect the largest number of vertices. These remain 
invariant under $X_k$ transformations, since the set $E^{(k)}$ cannot
contain edges with maximal cardinality. From this, it immediately follows that
any $k$-uniform hypergraph state cannot be Pauli-equivalent to another 
$k'$-uniform hypergraph, unless they are identical. This fact was noted 
for the case $k\neq k'$ before; the proof in Ref.~\cite{Rossi2013} required, 
however, lengthy calculations. Moreover, if one has a $k$-edge $e$ in a hypergraph, and
a $(k-1)$-edge $e'$ within this edge ($e' \subset e$), then one can always remove
$e'$ with a local Pauli operation (but this may introduce new edges of cardinality
$k-1$ since there may be other $k$-edges apart from $e$).
For the special case that an $N$-qubit hypergraph state contains an $N$-edge, one
can remove all the $(N-1)$-edges. This explains why the hypergraphs no.~17-27
in Fig.~\ref{hg-bild-vier} do not contain any 3-edges.

\section{Entanglement classes}
In this section, we distinguish and investigate the different 
LU equivalence classes for hypergraph states. Two
states are LU equivalent, and belong to the same class, if they are
connected by local unitary transformations. That is, if there exist 
local unitary operators $U_1,\dots,U_N$ such that
\begin{equation}
  \ket{\psi} = U_1\otimes\dots\otimes U_N\ket{\phi}.
  \label{eq:LUequiv_def}
\end{equation}
In this case we also write $\ket{\psi}\lueq\ket{\phi}$. We stress 
that other equivalence transformations are worth to be considered, 
for instance stochastic local operations and classical communication 
(SLOCC) \cite{gtreview}. For the present paper, however, we focus on 
equivalence under local unitaries and local Pauli operations, as the 
latter have a clear graph-theoretical interpretation.

In our classification, we focus on hypergraph states which have at least one 
edge with three or more vertices. The reason is that the states with only 
two-edges are graph states and their entanglement classes have already been 
extensively discussed \cite{graphs1, adan8, VandenNest2004, Ji2010_LU-LC}.

\subsection{Three qubits}
For three qubits, there is only one LU equivalence class of hypergraph states. It can 
be represented by the hypergraph in Fig.~\ref{hg-bild-2}(f). This figure also 
shows that any other three-qubit hypergraph state with a three-edge 
can be transformed into this state by local Pauli transformations. 
Despite this fact has been noted before \cite{Qu2013_encoding}, we will
discuss it in some detail, since this allows us to introduce the 
concepts and quantities that are also used for the general case. 

This hypergraph state is given by
\begin{align}
\ket{\hat{H}_3} & = \frac{1}{\sqrt{8}}(\ket{000}+\ket{001}+\ket{010}+\ket{011}+\ket{100}+
\ket{101}+\ket{110}-\ket{111}),
\end{align}
but after a Hadamard transformation on the third qubit one arrives at 
the simpler form
\begin{align}
\ket{H_3} & = \frac{1}{2}(\ket{000}+\ket{010}+\ket{100}+\ket{111}).
\end{align}
The entanglement properties of this state are the following: The reduced 
single-qubit matrices have all the largest eigenvalue equal to $3/4 $ and, 
consequently, the second eigenvalue is $1/4$. A possible entanglement 
witness detecting the entanglement in a general hypergraph state is 
of the type
\be
\WW = \alpha_S \eins - \ketbra{H},
\ee
where $\alpha_S$ is the maximal squared overlap between the state 
$\ket{H}$ and any biseparable pure state, that is a state of the form 
$\ket{\phi}= \ket{\alpha}_M \ket{\beta}_{\bar{M}}$ where $M |\bar{M}$ 
is some bipartition of the $N$ qubits. This overlap $\alpha_S$ can be 
directly computed as the maximum of all eigenvalues of all reduced 
states, so we have for the three-qubit state $\alpha_S = 3/4.$
In general, one can show that for $N$-qubit hypergraph states consisting
of only a single $N$-edge, one has that $\alpha_S=1-2^{(1-N)}$ \cite{scripta}.

The entanglement in the state $\ket{H_3}$ can be quantified by several 
entanglement monotones.  First, the bipartite negativity, defined as 
the sum of the absolute values of all negative eigenvalues, is for the $A|BC$
bipartition given by 
$N_{A|BC}(\vr) \equiv \sum_k |\lambda^-_k (\vr^{T_A})| = \sqrt{3}/4,$ 
for the other bipartitions it is the same. Second, we consider the geometric 
measure of entanglement, which is for pure states defined via
\be
E_G (\ket{\psi}) = 1- \max_{\ket{\phi}=\ket{a}\ket{b}\ket{c}}|\braket{\phi}{\psi}|^2,
\ee
as one minus the maximal squared overlap with fully product states. 
For mixed states, this entanglement monotone can be extended using the 
convex roof. Since LU equivalent states have the same geometric measure, this 
quantity can be used to prove that two states are not LU equivalent.
For the three-qubit hypergraph  state we find by direct numerical 
optimization
$
E_G ({\ket{H_3}}) = 0.32391
$.

A third measure which can be used to characterize genuine multipartite 
entanglement is the genuine multiparticle negativity, coming from the 
characterization of multiparticle entanglement using PPT mixtures 
\cite{Jungnitsch11, pptmixerprogram}. 
This measure can for general mixed states be computed via semidefinite 
programming (SDP), but for pure states and using the normalization of 
Ref.~\cite{hofmann} it is given by the minimal bipartite negativity, so in our 
case we have $N_G(\ket{H_3})=\sqrt{3}/4.$

Finally, we can investigate the robustness of the entanglement under 
noise. For that we consider states of the form
\be
\vr(p)= (1-p) \ketbra{H} + p \frac{\eins}{2^N}
\ee
and ask for the maximal value $p_{\rm max}$ for which the state is 
still genuine multiparticle entangled. Using the approach of PPT 
mixtures, a lower bound $p_{\rm ppt}$ on the maximal $p_{\rm max}$ 
can directly be computed via SDP \cite{pptmixerprogram}, and for the state $\ket{H_3}$ we 
find $p_{\rm ppt} = 0.4952.$ Two comments are in order here: First, 
$p_{\rm ppt}$ is clearly invariant under local unitaries, so it can 
be used to distinguish different LU classes. It is better suited 
for this task than the geometrical measure $E_G$, since for 
$p_{\rm ppt}$ the SDP results in a certified solution, while the 
computation of $E_G$ may, at least in principle, have
the problem of not finding the global optimum. Second, since 
for $\ket{H_3}$ the state $\vr(p)$ is locally equivalent to a 
permutationally invariant three-qubit state, the PPT mixture 
condition is  a necessary and sufficient condition for separability 
\cite{pipptmixer}.
So we have $p_{\rm max}=p_{\rm ppt}$ 
and states $\vr(p)$ with $p > 0.4952$ are biseparable.

Finally, note that for the three-qubit fully entangled hypergraph 
states without three-edge, that is, the usual graph states, there 
is one further equivalence class. This is given by the fully connected 
graph with the edges $E=\{\{1,2\},\{2,3\}, \{1,3\}\}$ and in a suitable 
basis the representing state is the three-qubit GHZ state
$
\ket{GHZ_3} = \tfrac{1}{\sqrt{2}}(\ket{000}+\ket{111}).
$

\begin{figure}[t]
\begin{center}
\includegraphics[width=0.7\textwidth]{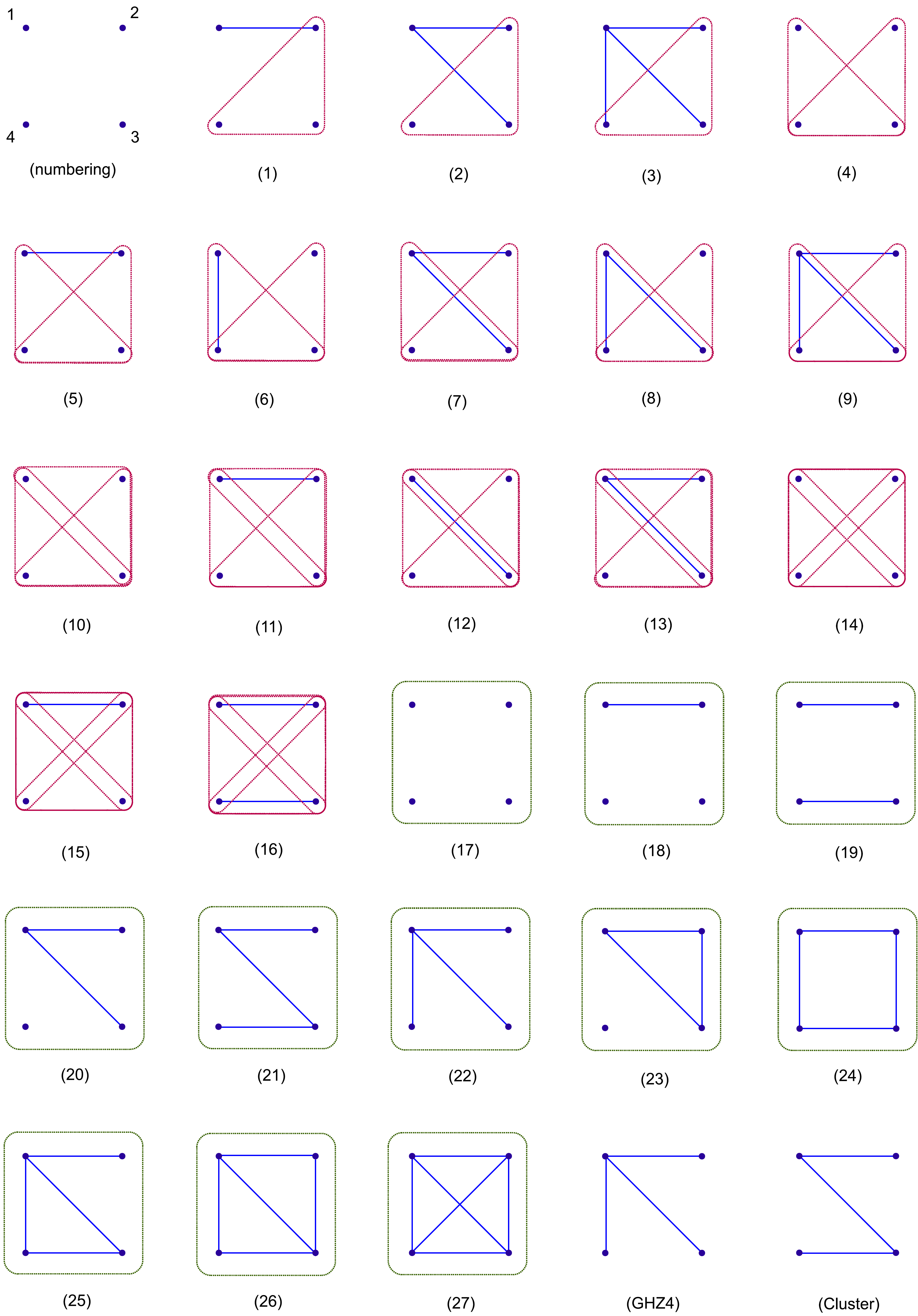}
\end{center}
\caption{Representatives of all the 27 equivalence classes 
under local unitary transformations for hypergraph states of four
qubits. In addition, the usual graphs of the four-qubit GHZ state
and the four-qubit cluster state are shown.
}
\label{hg-bild-vier}
\end{figure}

\subsection{Four qubits}

In order to characterize all the entanglement classes for four 
qubits we proceed as follows: We investigate all the $2^{2^4}=65536$ 
possible hypergraphs and test for equivalence under local Pauli 
operations as well as under permutation of the qubits. After that, 
we find 27 classes among all hypergraphs which contain at least 
one three-edge and where all vertices are connected. These 
hypergraphs are shown in Fig.~\ref{hg-bild-vier}.

In Table \ref{hg4table} we list the entanglement properties of the
corresponding hypergraph states. From the properties of the 
reduced density matrices and the reported values of $E_G$ and 
$p_{\rm ppt}$ one can directly conclude that these states are 
inequivalent under general local unitary transformations and not
only under local Pauli transformations. 

Three states among the 27 states are of special interest. These are
the states no.~3, no.~9 and no.~14, as for these states the reduced
single qubit matrices are maximally mixed. Therefore, they are all in the
maximally entangled set~\cite{DeVicente2013_maximally}, i.e.\ they
cannot be reached by transformations consisting of local operations and
classical communication from any other state with the same
number of qubits. The state no.~3 can, after applying a Hadamard 
transformation on the first qubit, be  written as:
\begin{align}
\ket{V_3} &
= \frac{1}{\sqrt{8}}\big[(\ket{0011}+\ket{0101}+\ket{0110}+\ket{1001}+
\ket{1010}+\ket{1100})+(\ket{0000}-\ket{1111})\big]
\nonumber
\\
& = \sqrt{\frac{{3}}{{4}}} \ket{D_4} + \frac{1}{2} \ket{GHZ_4^-},
\end{align}
where $\ket{GHZ_4^-}=(\ket{0000}-\ket{1111})/\sqrt{2}$
and $\ket{D_4}=(\ket{0011}+\ket{0101}+\ket{0110}+\ket{1001}+
\ket{1010}+\ket{1100})/\sqrt{6}.$
That is, it is a superposition of the four-qubit Dicke state
with two excitations with a four-qubit GHZ state. From this form 
it is clear that the state $\ket{V_3}$ belongs to the symmetric 
subspace. The state $\ket{V_3}$ can be viewed as if first a GHZ state
via two-body interactions is created, then, an additional controlled
phase gate on the qubits $\{2,3,4\}$ is applied. This still increases
the entanglement, since the geometric measure as well as the robustness
$p_{\rm ppt}$ is higher than for a GHZ state. 

The second state, no.~9, can (after applying 
$X_3$ and $Z_1$ and further Hadamard transformations on qubits 3 and 4)
be written as
\begin{align}
\ket{V_9} &
= \frac{1}{\sqrt{2}} \ket{GHZ_4^-}
+\frac{1}{2} \ket{01}\ket{\gamma} + \frac{1}{2} \ket{10}\ket{\overline{\gamma}},
\end{align}
where
$\ket{\gamma}=(\ket{00}+\ket{01}-\ket{10}+\ket{11})/{2}$
and 
$\ket{\overline{\gamma}}=(\ket{00}-\ket{01}+\ket{10}+\ket{11})/{2}$ are Bell-type states.
{From} this form, a symmetry under simultaneous permutation of the qubits
$1 \leftrightarrow 2$ and $3 \leftrightarrow 4$ becomes apparent.

The third state with the property that all single 
qubit reduced states are $\mathbb{1}/2$ is the state no.~14. After a Hadamard 
and a $Z$ transformation on the fourth qubit this is given by
\begin{align}
\ket{V_{14}} &
= \frac{1}{\sqrt{8}}\big[(\ket{0011}+\ket{0101}+\ket{0110}+\ket{1001}+
\ket{1010}+\ket{1100})+(\ket{0001}-\ket{1110})\big]
\nonumber
\\
& = \sqrt{\frac{{3}}{{4}}} \ket{D_4} + \frac{1}{2} \ket{\overline{GHZ}_4^-},
\end{align}
where $\ket{\overline{GHZ}_4^-}=(\ket{0001}-\ket{1110})/\sqrt{2}$,
so it is a superposition of a Dicke state with another GHZ state.
Note that in a suitable basis this state belongs to the symmetric 
subspace again, as can be seen from the symmetry of the hypergraph.

Besides these highly entangled states, seven hypergraph states have a very simple
form, if they are written in the appropriate basis. These are
\begin{align}
\ket{V_{1}} &=\frac{1}{2}(\ket{0000}+\ket{0001}+\ket{1100}+\ket{1111}),
\nonumber
\\
\ket{V_{2}} &=\frac{1}{2}(\ket{0000}+\ket{0111}+\ket{1010}+\ket{1100}),
\nonumber
\\
\ket{V_{4}} &=\frac{1}{2}(\ket{0000}+\ket{0001}+\ket{0010}+\ket{1111}),
\nonumber
\\
\ket{V_{6}} &=\frac{1}{2}(\ket{0000}+\ket{0010}+\ket{0111}+\ket{1001}),
\nonumber
\\
\ket{V_{8}} &=\frac{1}{2}(\ket{0000}+\ket{1001}+\ket{1010}+\ket{1111}),
\nonumber
\\
\ket{V_{10}} &=\frac{1}{\sqrt{8}}(2\ket{0000}+\ket{0011}+\ket{0110}+\ket{1010}-\ket{1111}),
\nonumber
\\
\ket{V_{12}} &=\frac{1}{\sqrt{8}}(2\ket{0000}+\ket{0010}-\ket{0111}+\ket{1011}+\ket{1110}).
\end{align}

In addition to the 27 classes, there are two LU equivalence classes
for the usual graph states, the GHZ and the cluster state. The corresponding 
graphs are also shown in Fig.~\ref{hg-bild-vier}. Finally, it is worth 
mentioning that some other well-known states for four qubits \cite{gtreview}, 
such as the 
W state $\ket{W_4}$, the four-qubit singlet state $\ket{\Psi_4}$ or
the $\chi$-state $\ket{\chi_4}$ are not locally equivalent to any 
hypergraph state.

\begin{table}[t]
\begin{center}
\begin{tabular}{|c|c| c c c c | c c c| c | c| c| }
\hline
Class & $E_G$ & $\varrho_A$ & $\varrho_B$ & $\varrho_C$ & $\varrho_D$ 
& $\varrho_{AB}$ & $\varrho_{AC}$ & $\varrho_{AD}$ & $\alpha_{BS}$ & 
$N_{\rm gen}$ & $p_{\rm ppt}$
\\
\hline
\hline
1 & 0.50000 & 1/2 & 1/2 & 3/4 & 3/4 & 3/4 & 1/2 & 1/2 & 3/4 & $\sqrt{3}/4$& 0.5430
\\
2 & 0.65651 & 1/2& 1/2&1/2 &3/4 &1/2 &1/2 &1/2 & 3/4& $\sqrt{3}/4$&0.5803
\\
3 &0.65277 & 1/2 &  1/2&  1/2&  1/2&  1/2& 1/2 &  1/2&  1/2& 1/2&0.5664
\\
4 & 0.34549& 3/4& 3/4& 3/4&3/4 & 3/4& $\Gamma_1$&$\Gamma_1$ & 3/4& $\sqrt{3}/4$&0.5286
\\
5 & 0.57322 & 1/2&1/2 &3/4 &3/4 &3/4 &$\Gamma_2$ &$\Gamma_2$ & 3/4&$\sqrt{3}/4$&0.5549
\\
6 & 0.50000 & 3/4& 3/4& 3/4&1/2 &1/2 &$\Gamma_1$&$\Gamma_1$ & 3/4& $\sqrt{3}/4$&0.5466
\\
7 & 0.62500 & 1/2&1/2 &1/2 &3/4 &1/2 &$\Gamma_2$ & $\Gamma_2$&3/4 & $\sqrt{3}/4$&0.5722
\\
8 & 0.63572 & 3/4&3/4 & 1/2&1/2 &1/2 & $\Gamma_1$& $\Gamma_1$& 3/4&$\sqrt{3}/4$ &0.5409
\\
9 & 0.63572 & 1/2&1/2 &1/2 &1/2 & 1/2 & $\Gamma_2$& $\Gamma_2$&1/2 & 1/2& 0.5848
\\
10 & 0.50000 &3/4 & 3/4& 1/2&3/4 & $\Gamma_1$& $\Gamma_1$& $\Gamma_1$&3/4 & $\sqrt{3}/4$ &0.5306
\\
11 & 0.59872 & 1/2&1/2 &1/2 &3/4 & $\Gamma_1$& $\Gamma_2$& $\Gamma_2$&3/4 & $\sqrt{3}/4$&0.5700
\\
12 & 0.37500&3/4 & 3/4&3/4 &3/4 &$\Gamma_1$ &$\Gamma_1$ & $\Gamma_1$&3/4 &$\sqrt{3}/4 $ & 0.5242
\\
13 & 0.62500& 1/2 &1/2 & 3/4& 3/4& $\Gamma_1$& $\Gamma_2$& $\Gamma_2$&3/4 & $\sqrt{3}/4$&0.5523
\\
14 & 0.57161 & 1/2& 1/2& 1/2& 1/2& $\Gamma_1$&$\Gamma_1$ &$\Gamma_1$ &$\Gamma_1$ & 1/2&0.5346
\\
15 &  0.58726&  3/4& 3/4&1/2 &1/2 & $\Gamma_1$&$\Gamma_2$ &$\Gamma_2$ & 3/4& $\sqrt{3}/4$&0.5568
\\
16 & 0.43750&3/4 &3/4 &3/4 &3/4 &$\Gamma_1$ &$\Gamma_1$ & $\Gamma_1$& 3/4& $\sqrt{3}/4$&0.5157
\\
17 & 0.19018 & 7/8 &7/8 &7/8 & 7/8& $\Gamma_3$&$\Gamma_3$ &$\Gamma_3$ & 7/8& $\sqrt{7}/8$&0.4316
\\
18 & 0.43187 &5/8 &5/8 &7/8 &7/8 & $\Gamma_3$ & $\Gamma_4$ & $\Gamma_4$ & 7/8 &$\sqrt{7}/8 $&0.4806
\\
19 & 0.64376 & 5/8& 5/8& 5/8& 5/8 & $\Gamma_3$ &$\Gamma_2$ &$\Gamma_2$ & $\Gamma_3$&3/8 &0.5411
\\
20 & 0.46240 &5/8 & 5/8& 5/8&7/8 &$\Gamma_4$ &$\Gamma_4$ &$\Gamma_4$ & 7/8 &$\sqrt{7}/8$ & 0.5261
\\
21 & 0.65277 &5/8 & 5/8& 5/8&5/8 &$\Gamma_4$ & $\Gamma_2$&$\Gamma_2$ &5/8 & $\sqrt{15}/8$&0.5736
\\
22 & 0.46097 &5/8 &5/8 &5/8 &5/8 &$\Gamma_4$ &$\Gamma_4$ &$\Gamma_4$ &5/8 &$\sqrt{15}/8$ &0.5423
\\
23 & 0.54497 & 5/8&5/8 &5/8 &7/8 & $\Gamma_4$&$\Gamma_4$&$\Gamma_4$ &7/8 & $\sqrt{7}/8$&0.5193
\\
24 & 0.55656 & 5/8& 5/8& 5/8&5/8 &$\Gamma_2$ &$\Gamma_4$ & $\Gamma_2$& 5/8& $\sqrt{15}/8$&0.5744
\\
25 & 0.62926 &5/8 &5/8 & 5/8& 5/8&$\Gamma_4$ &$\Gamma_2$ &$\Gamma_2$ & 5/8& $\sqrt{15}/8$&0.5728
\\
26 & 0.53879 & 5/8& 5/8& 5/8& 5/8&$\Gamma_2$ &$\Gamma_4$ &$\Gamma_2$ & 5/8 &$\sqrt{15}/8$ &0.5699
\\
27 &0.55637 &5/8 &5/8 & 5/8&5/8 &$\Gamma_4$ & $\Gamma_4$&$\Gamma_4$ & 5/8 & $\sqrt{15}/8$&0.5229
\\
\hline
GHZ & 1/2 & 1/2& 1/2& 1/2& 1/2& 1/2& 1/2 &1/2 & 1/2 &1/2 & 4/7
\\
Cluster & 3/4 & 1/2& 1/2& 1/2& 1/2& 1/2& 1/4 &1/2 & 1/2 &1/2 & 8/13
\\
\hline
\end{tabular}
\end{center}
\caption{Entanglement properties of the 27 equivalence classes of four-qubit 
hypergraph states. The table shows the values of the geometric measure of 
entanglement $E_G$, the maximal eigenvalues of the single-qubit reduced 
states ($\vr_A, \vr_B,\vr_C,\vr_D$), the maximal eigenvalues of the 
two-qubit reduced states ($\vr_{AB}, \vr_{AC},\vr_{AD}$), 
the maximal overlap with biseparable states ($\alpha_{BS}$),
the genuine multiparticle negativity ($N_{\rm gen}$) and
the noise robustness of multiparticle entanglement according
to the PPT mixture approach ($p_{\rm ppt}$).
The used constants are given by 
$\Gamma_1=(3+\sqrt{5})/8 \approx 0.65450$,
$\Gamma_2=(2+\sqrt{2})/8 \approx 0.42677$,
$\Gamma_3=(4+\sqrt{7})/8 \approx 0.83071$,
and
$\Gamma_4= (8+13/\sqrt[3]{z}+\sqrt[3]{z})/24 \approx 0.60170$
with $z=8+i3\sqrt{237}.$
}
\label{hg4table}
\end{table}

\subsection{Five qubits}
For five qubits, there are already $2^{32}$ possible hypergraphs, 
which makes an exhaustive classification difficult. Moreover, it
can be expected that the resulting number of classes is significantly
larger than in the four-qubit case and therefore a complete classification
is of limited use.

For five and more qubits we focus therefore on $k$-uniform 
hypergraphs, since for them an exhaustive treatment is still 
possible. These type of states are arguably relevant for experiments, as their
generation requires the same type of multiqubit interaction.
As can be seen from the hypergraph transformation rule outlined in
Section~\ref{sec:graph_trafos},
any $k$-uniform hypergraph cannot be converted into a different
$k$-uniform hypergraph by local Pauli operators. We 
focus our discussion only on the $k$-uniform hypergraphs where 
for the corresponding states the reduced single-particle states 
are maximally mixed. As mentioned before, these states are relevant because
they are in the maximally entangled set \cite{DeVicente2013_maximally}, 
and, moreover, the condition of having the single-qubit reduced states 
maximally mixed is necessary for quantum error correction.
Note that for two-uniform hypergraphs (i.e.\ corresponding to graph states),
any state has the property that the reduced states are maximally mixed, 
and hence is also in the maximally entangled set.

\begin{figure}[t]
\begin{center}
\includegraphics[width=0.7\textwidth]{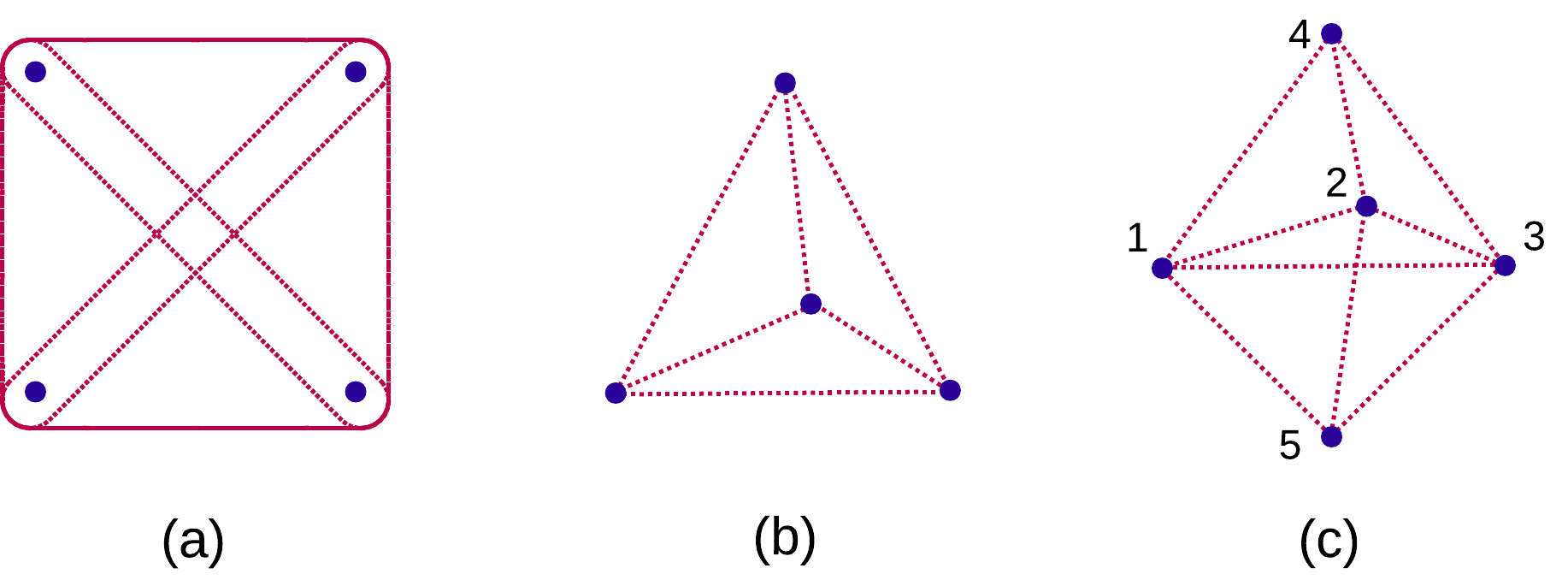}
\end{center}
\caption{(a, b) The graph of the four-qubit hypergraph state No. 14 can 
also be depicted as a tetrahedron, where the triangles on the surface represent
the three-edges. (c) The hypergraph of the only three-uniform five-qubit state
with maximally mixed reduced states can be represented in a similar way. This hypergraph has 7 edges corresponding to the triangles. In fact, all possible 
three-edges apart from the ones that contain $4$ and $5$ are present.
}
\label{hg-bild-fuenf}
\end{figure}

For five qubits, one directly finds that there is exactly one further
$k$-uniform hypergraph state with this property. The hypergraph 
is three-uniform and  depicted in Fig.~\ref{hg-bild-fuenf}. The  hypergraph state is 
symmetric on the first three qubits, and, after applying Hadamard 
transformations on all qubits, it can be written as
\begin{align}
\ket{F_{1}} &
= \frac{1}{\sqrt{2}}\ket{GHZ_5} 
+ \frac{1}{2} \ket{\eta} \ket{00}
+ \frac{1}{2} \ket{\overline{\eta}} \ket{11},
\end{align}
where $\ket{GHZ_5}=(\ket{00000}+ \ket{11111})/\sqrt{2}$ is the five-qubit
GHZ state, $\ket{\eta}=(\ket{100}+\ket{010}+\ket{001}-\ket{111})/2$
is a three-qubit state unitarily equivalent to the three-qubit GHZ state, 
and $\ket{\overline{\eta}}=X_1 X_2 X_3\ket{\eta}$ arises from $\ket{\eta}$ when
all qubits are flipped.

The geometric measure of entanglement for this state is $E_G=0.6000$,
and the genuine negativity is $N_G = 1/2$. The overlap with biseparable 
states is $\alpha_S = (3+\sqrt{5})/8\approx 0.65450$, and the noise 
robustness is given by $p_{\rm ppt} = 0.6070.$


\subsection{Six qubits}
Also for six qubits, a full characterization of all $k$-uniform hypergraph
states with maximally mixed reduced single-particle states is possible. 
Apart from the usual graph states, one finds such states only for 
three-uniform hypergraphs. It turns our that there are 24 different 
classes, a detailed list is given in Table \ref{6qbtable} in the Appendix. 
For these states one can say the following: All of them are LU inequivalent, 
as for all of them the geometric measure $E_G$ is different. For none of 
these states, also all of the two-qubit reduced density matrices are 
maximally mixed.

Two states in this set may be of interest: First the state which has 
(after appropriate local unitary transformations) the simplest form in
the standard basis is the representative of the first class, 
\begin{align}
\ket{S_1}&=\frac{1}{8}\big[
(\ket{000000}-\ket{111111})
+ (\ket{001}+\ket{010}+\ket{100})\otimes \ket{000}+
\nonumber
\\
&
+ (\ket{110}+\ket{101}+\ket{011})\otimes \ket{111}
\big].
\end{align}
This state has a geometric measure $E_G(\ket{S_1})=0.6.$
The state with the highest geometric measure of entanglement is the 
representative of the 21th class, and its measure is $E_G(\ket{S_{21}})=0.817.$

\section{General conditions on local unitary equivalence 
and local Pauli equivalence \label{sec:LU_LMES}}

As shown in the previous section, the finite set of local Pauli operations 
is sufficient to characterize the LU equivalence classes for hypergraph 
states of four qubits. So the question arises, under which conditions this
is true in more general cases. We note that it cannot be true in general, 
since for the usual graph states not even the larger set of local Clifford
operations is sufficient to characterize LU equivalence.

In this section we establish general conditions, under which local Pauli
(LP) transformations are the only transformations that have to be checked
in order to prove LU equivalence. We first review a necessary and sufficient 
condition for LU equivalence of general $N$-partite states presented in 
Refs.~\cite{Kraus2010,Kraus2010a}, and then show that for many hypergraph 
states this can be reduced to equivalence under the action of LP operators, 
which transform the corresponding hypergraph in a simple way.

\subsection{Local unitary equivalence of multipartite states}

First, we will use the notation $\varrho_i = \ptr{\neg i}(\ketbra{\psi})$ for
the single-qubit reduced state of $\ket{\psi}$, and equivalently $\sigma_i$
for $\ket{\phi}$. It has been shown that two generic states, where the single-qubit
reduced states are not maximally mixed (that is, 
$\varrho_i,\sigma_i\not\propto\mathbb{1}$ for all $i$), are LU equivalent 
if and only if their corresponding so-called {\it unique standard form} 
coincides.  Moreover, for non-generic states, there exists an algorithm 
that determines the local unitaries that transform $\ket{\phi}$ to $\ket{\psi}$ 
in case they exist \cite{Kraus2010,Kraus2010a}.

As we are concerned here with the LU equivalence of generic states, 
we briefly recall the criteria of LU equivalence in this case. The 
standard form $\ket{\psi^\mathrm{s}}$ of an $N$-partite state $\ket{\psi}$ 
can be obtained in three steps~\cite{Kraus2010}. First, the state is 
transformed to its \emph{trace decomposition},
\begin{equation}
  \ket{\psi^\mathrm{t}}
  =
  U^{\rm t}_1 \otimes \cdots \otimes U^{\rm t}_N \ket{\psi},
\end{equation}
where $U^{\rm t}_i$ are local unitaries such that the single-qubit reduced
states are diagonal, i.e.\ $\varrho^{\rm t}_i=U^{\rm t}_i\varrho_i(U^{\rm t}_i)^\dag =
D_i \equiv \diag (\mu_i^1,\mu_i^2)$ in the standard basis. Second, 
the \emph{sorted trace decomposition} $\ket{\psi^\mathrm{st}}$ is the trace 
decomposition with $\mu_i^1\ge\mu_i^2$ for all $i$. This decomposition 
can be reached by applying local $X$ operators to $\ket{\psi^\mathrm{t}}$ 
when necessary. The sorted trace decomposition is already unique, up to a 
global phase $\alpha_0$ and local phase operators 
$Z_i(\alpha_i)=\diag(1,\e^{i\alpha_i})$. That is,
$\e^{i\alpha_0}\bigotimes_iZ_i(\alpha_i)\ket{\psi^\mathrm{st}}$ 
is also a sorted trace decomposition of $\ket{\psi}$. So finally, 
the sorted trace decomposition is made unique by imposing the following 
conditions on the phases. We write the sorted trace decomposition in 
the computational basis,
$\ket{\psi^\mathrm{st}}=\sum_{\bm{i}} \lambda_{\bm{i}} \ket{\bm{i}}$, with
$\bm{i}\in\mathbb{Z}_2^N$ being a binary vector of length $N$. Then, from the set
$\Lambda=\{\bm{i}:\lambda_{\bm{i}}\neq0\}$, we pick a maximal set of linearly 
independent vectors $\bm{i}$ by going through the set $\Lambda$ in the order 
given by the computational basis. These vectors then form the set 
$\bar{\Lambda}\subseteq \Lambda$. The global phase $\alpha_0$ is set such that
\begin{align}
  \lambda_{\bm{0}} > 0    &\quad \text{if $\lambda_{\bm{0}} \neq 0$} \\
  \lambda_{\bm{i_0}} > 0  &\quad \text{if $\lambda_{\bm{0}} = 0$},
\end{align}
where $\bm{i_0}$ is the first linearly \emph{dependent} vector in $\Lambda$.
The other $N$ phases $\alpha_1,\dots,\alpha_N$ are set so that $\e^{i\alpha_0}\lambda_{\bm{i}}>0$ for $\bm{i}\in\bar{\Lambda}$.
Since the resulting \emph{standard form} $\ket{\psi^\mathrm{s}}$ is unique, we have
\begin{equation}
  \ket{\psi} \lueq \ket{\phi}
  \textrm{ iff }
  \ket{\psi^\mathrm{s}} = \ket{\phi^\mathrm{s}}
  .
\end{equation}
This necessary and sufficient condition can be rewritten in the following
way \cite{Kraus2010a}:
\begin{align}
  \ket{\psi} \lueq \ket{\phi} &
  \textrm{ if and only if:}
  \label{eq:LUequiv}
  \\
  \mbox{ there exist } & \{\alpha_i\}_{i=0}^N,\bm{k},\{U_i^\mathrm{t}\}_{i=1}^N,
  \{V_i^\mathrm{t}\}_{i=1}^N
  \mbox{ such that }
  \bigotimes_i U^\mathrm{t}_i \ket{\psi} = 
  \e^{i\alpha_0} \bigotimes_i Z_i(\alpha_i) X_i^{k_i} V^\mathrm{t}_i \ket{\phi}
  .
  \nonumber
\end{align}
Here $U^\mathrm{t}_i$ and $V^\mathrm{t}_i$ are the local unitaries that
transform $\ket{\psi}$ and $\ket{\phi}$ to their trace decompositions
$\ket{\psi^\mathrm{t}}$ and $\ket{\phi^\mathrm{t}}$, respectively, and
$\bm{k}\in\mathbb{Z}_2^N$ is a bit string such that the operators $X_i^{k_i}$
make the order of the eigenvalues of the single qubit reduced states
$U_i^\mathrm{t}\varrho_i(U_i^\mathrm{t})^\dag$ and
$V_i^\mathrm{t}\sigma_i(V_i^\mathrm{t})^\dag$ coincide. Then, one only needs
to check that a solution for the set $\{\alpha_i\}_{i=0}^N$ exists. 
In this case, the unitaries of
Eq.~(\ref{eq:LUequiv_def}) that transform $\ket{\phi}$ to $\ket{\psi}$ (up to
a global phase) are $U_i=(U_i^\mathrm{t})^\dag
Z_i(\alpha_i)X_i^{k_i}V_i^\mathrm{t}$.

\subsection{LU equivalence of generic LME states}

As shown in Ref.~\cite{Kruszynska2009}, a state is an LME state 
if and only if it is LU equivalent to
\begin{equation}
  \ket{\psi} = \frac{1}{\sqrt{2^N}} \sum_{\bm{x}\in\mathbb{Z}_2^N}
  \e^{i\alpha_{\bm{x}}} \ket{\bm{x}}
  ,
  \label{eq:LMES_Zbasis}
\end{equation}
with $\alpha_{\bm{x}}$ being real phases. Such a general LME state with 
arbitrary phases is brought to the trace decomposition by local unitaries 
$HZ(-\beta_i)$, where
$\cot(\beta_i)=\frac{\avg{X_i}}{\avg{Y_i}}$ if $\avg{Y_i}\neq0$ and
$\beta_i=0$ else~\cite{Kraus2010a}. We call the state
$\bigotimes_iZ_i(-\beta_i)\ket{\psi}$ an LME state in the $Z$ basis. In the
following, we consider an LME states to be always in this basis, unless otherwise
stated. This means that for every qubit $i$, the single-qubit reduced state is
of the form
\begin{equation}
  \varrho_i\propto\begin{pmatrix}1&x\\x&1\end{pmatrix}
\end{equation}
with $x\in\mathbb{R}$. If the state is generic, then $x\neq0$ since the reduced
states are not maximally mixed. In this basis,
the state is brought to the trace decomposition by applying the Hadamard 
transformation $H^{\otimes N}$.

Hence, for $\ket{\psi}$ and $\ket{\phi}$ generic LME states in the $Z$ basis,
the LU equivalence condition (\ref{eq:LUequiv}) reads
\begin{align}
  \ket{\psi} \lueq \ket{\phi} &
  \textrm{ if and only if:}
  \nonumber
  \\
  \mbox{ there exist } 
  &
  \{\alpha_i\}_{i=0}^N,\bm{k}
  \mbox{ such that }
  \bigotimes_i H_i \ket{\psi} = \e^{i\alpha_0} \bigotimes_i Z_i(\alpha_i) X_i^{k_i} H_i \ket{\phi}
  ,
\end{align}
and the unitaries (\ref{eq:LUequiv_def}) that transform $\ket{\phi}$ to
$\ket{\psi}$ must be of the form
\begin{equation}
  U_i
  =
  H_i Z_i(\alpha_i) X_i^{k_i} H_i
  =
  X_i(\alpha_i) Z_i^{k_i}
  ,
 \label{eq:local_unitaries_c11}
\end{equation}
up to a global phase.
Here we defined $X_i(\alpha_i)\equiv H_iZ_i(\alpha_i)H_i$.
As one can see, the allowed unitaries are already quite restricted, as all the
unitaries bringing a generic LME state to its trace-decomposition are the same.
That is, the difference from just Pauli operators comes from the freedom of
local phase operators in the sorted trace decomposition.

In the following lemma we show that, for certain LME states, this freedom can
be further reduced and it is enough to consider $\alpha_i\in\{0,\pi\}$, i.e.\
either the identity or $X_i$, which implies that in this basis, and up to a
global phase, $\ket{\psi}$ is LU equivalent to $\ket{\phi}$ if and only if
they are local Pauli (LP) equivalent. First, we define
$M_{\ket{\phi^\mathrm{t}}}$ as the matrix whose rows are the first linearly
independent vectors $\bm{x}$ such that
$\braket{\bm{x}}{\phi^\mathrm{t}}\neq0$. That is, $M_{\ket{\phi^\mathrm{t}}}$
is analogous to the set $\bar{\Lambda}$ but for the trace decomposition, and
expressed in matrix form.

\begin{lem}
  \label{lem:localphase_localZ}
  Let $\ket{\psi^\mathrm{t}}$ and $\ket{\phi^\mathrm{t}}$ be $N$-qubit LME
  states in the trace-decomposition that fulfill the following conditions:
  \begin{enumerate}
    \item They have real coefficients
      \label{lem:localphase_localZ_real}
      $\braket{\bm{x}}{\psi^\mathrm{t}},\braket{\bm{x}}{\phi^\mathrm{t}}\in\mathbb{R}$.
    \item Coefficients $\braket{\bm{0}}{\psi^\mathrm{t}}$ and
      $\braket{\bm{0}}{\phi^\mathrm{t}}$ are different from 0.
      \label{lem:localphase_localZ_0}
    \item $M_{\ket{\phi^\mathrm{t}}}$ is square and invertible in
      $\mathbb{Z}_2$, i.e.| $|M_{\ket{\phi^\mathrm{t}}}|\equiv1\pmod{2}$.
      \label{lem:localphase_localZ_matrix}
  \end{enumerate}
  Then, $\ket{\phi^\mathrm{t}}$ can be converted to $\ket{\psi^\mathrm{t}}$ by
  local unitary phase operators if and only if $\ket{\phi^\mathrm{t}}$ can be
  converted to $\ket{\psi^\mathrm{t}}$ by local $Z$ operators, i.e.\
  \begin{align}
  &
    \mbox{ There exist } \{\alpha_i\}_{i=0}^n \in\mathbb{R} 
    \mbox{ such that }
    \ket{\psi^\mathrm{t}} = \e^{i\alpha_0} \bigotimes_{i=1}^nZ_i(\alpha_i) 
    \ket{\phi^\mathrm{t}} 
    \textrm{ if and only if}
    \nonumber
    \\
    &
     \mbox{ There exist } \tilde{\alpha}_0\in\mathbb{R}, \mbox{ and }\bm{l}\in\mathbb{Z}_2^n 
     \mbox{ such that }
    \ket{\psi^\mathrm{t}} = \e^{i\tilde{\alpha}_0} \bigotimes_{i=1}^nZ_i^{l_i} \ket{\phi^\mathrm{t}}
    .
  \end{align}
\end{lem}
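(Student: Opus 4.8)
The two displayed statements differ only in that local $Z$ operators are the local phase operators with $\alpha_i\in\{0,\pi\}$; since $Z_i^{l_i}=Z_i(\pi l_i)$, the implication from local $Z$ to local phase operators is immediate, so the entire content is the converse. I would start from the assumed relation $\ket{\psi^\mathrm{t}}=\e^{i\alpha_0}\bigotimes_i Z_i(\alpha_i)\ket{\phi^\mathrm{t}}$ and compare it coefficient by coefficient in the computational basis. Because the diagonal operators $Z_i(\alpha_i)$ are invertible, $\ket{\psi^\mathrm{t}}$ and $\ket{\phi^\mathrm{t}}$ have the same support, and on it $\braket{\bm{x}}{\psi^\mathrm{t}}=\e^{i(\alpha_0+\bm{x}\cdot\bm{\alpha})}\braket{\bm{x}}{\phi^\mathrm{t}}$, where $\bm{\alpha}=(\alpha_1,\dots,\alpha_N)$ and $\bm{x}\cdot\bm{\alpha}=\sum_i x_i\alpha_i$. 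The reality assumption forces both coefficients to be real and nonzero, so each phase $\e^{i(\alpha_0+\bm{x}\cdot\bm{\alpha})}$ equals $\pm1$; using the assumption that $\bm{0}$ lies in the support, the global factor $\e^{i\alpha_0}$ is already $\pm1$, and it will serve as $\e^{i\tilde{\alpha}_0}$.

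The next step is to manufacture $\bm{l}$. Writing the rows of $M\equiv M_{\ket{\phi^\mathrm{t}}}$ as $\bm{x}^{(1)},\dots,\bm{x}^{(N)}$, the signs $\e^{i\bm{x}^{(j)}\cdot\bm{\alpha}}=(-1)^{r_j}$ define a vector $\bm{r}\in\mathbb{Z}_2^N$, and since $M$ is invertible over $\mathbb{Z}_2$ I can solve $M\bm{l}=\bm{r}$ for $\bm{l}\in\mathbb{Z}_2^N$. It is convenient to absorb this choice into shifted angles $\bm{\alpha}'=\bm{\alpha}-\pi\bm{l}$, so that the lemma reduces to showing $\bigotimes_i Z_i(\alpha_i')$ acts as the identity on $\ket{\phi^\mathrm{t}}$, i.e.\ $\bm{x}\cdot\bm{\alpha}'\in 2\pi\mathbb{Z}$ for every support vector $\bm{x}$. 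A one-line computation using $\bm{x}^{(j)}\cdot\bm{l}\equiv r_j\pmod 2$ checks this on the rows $\bm{x}^{(j)}$ and on $\bm{0}$.

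The main obstacle is to propagate this from the $\mathbb{Z}_2$-basis $\{\bm{x}^{(j)}\}$ to the entire support. The difficulty is that $\bm{x}\mapsto\bm{x}\cdot\bm{\alpha}$ is linear over $\mathbb{Z}$ but not over $\mathbb{Z}_2$, so expressing a support vector as a mod-$2$ sum of the $\bm{x}^{(j)}$ introduces carry terms that could spoil the phase. I would resolve this number-theoretically. From $\bm{x}^{(j)}\cdot\bm{\alpha}'\in 2\pi\mathbb{Z}$ for all $j$ we get $M\bm{\alpha}'\in 2\pi\mathbb{Z}^N$, hence $\bm{\alpha}'=2\pi M^{-1}\bm{n}$ for some $\bm{n}\in\mathbb{Z}^N$ (note that $\det M$ odd makes $M$ invertible over $\mathbb{Q}$). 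For any support vector $\bm{x}$ this gives $\bm{x}\cdot\bm{\alpha}'=2\pi\,\bm{x}^{\top}M^{-1}\bm{n}=2\pi\,p/\!\det M$ with $p=\bm{x}^{\top}\mathrm{adj}(M)\,\bm{n}\in\mathbb{Z}$. The reality argument of the first paragraph yields $\e^{i\bm{x}\cdot\bm{\alpha}'}=\pm1$, i.e.\ $2p/\!\det M\in\mathbb{Z}$; because $\det M$ is odd and hence coprime to $2$, this forces $p/\!\det M\in\mathbb{Z}$, so $\bm{x}\cdot\bm{\alpha}'\in 2\pi\mathbb{Z}$ and the phase is exactly $1$. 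Consequently $\bigotimes_i Z_i(\alpha_i')\ket{\phi^\mathrm{t}}=\ket{\phi^\mathrm{t}}$, so $\bigotimes_i Z_i(\alpha_i)$ and $\bigotimes_i Z_i^{l_i}$ agree on the support of $\ket{\phi^\mathrm{t}}$, which rearranges to $\ket{\psi^\mathrm{t}}=\e^{i\alpha_0}\bigotimes_i Z_i^{l_i}\ket{\phi^\mathrm{t}}$ with $\e^{i\alpha_0}=\pm1$, the desired local-$Z$ form. The crux, and where I expect the genuine work to sit, is exactly this passage from $\tfrac12\mathbb{Z}$ to $\mathbb{Z}$, which is why the hypothesis demands $\det M$ odd rather than merely $M$ invertible over the reals.
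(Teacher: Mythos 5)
Your proof is correct, and its core machinery is the same as the paper's: reality of the coefficients turns the phase relation into sign constraints $\alpha_0+\bm{x}^\intercal\bm{\alpha}\in\pi\mathbb{Z}$ on the support, the constraints are restricted to the rows of $M$, and integrality of the adjugate together with $\det M$ odd produces a binary solution (your $\bm{l}$ solving $M\bm{l}=\bm{r}$ over $\mathbb{Z}_2$ coincides with the paper's $\bm{l}\equiv \mathrm{adj}(M)\,\bm{k}\pmod{2}$). The genuine difference is the step you single out as the crux: proving that agreement of $\pi\bm{l}$ with $\bm{\alpha}$ modulo $2\pi$ on the rows of $M$ propagates to \emph{every} support vector. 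The paper's proof does not contain this step: it writes the constraints for all $\bm{x}\in S_{\phi^\mathrm{t}}$, verifies the binary solution only on the rows of $M$, and then concludes. As you correctly note, the propagation is not automatic, because $\bm{x}\mapsto\bm{x}^\intercal\bm{\alpha}$ is $\mathbb{Z}$-linear but not $\mathbb{Z}_2$-linear: a mod-$2$ decomposition of $\bm{x}$ into rows of $M$ produces carry terms $2\bm{m}$, and $2\bm{m}^\intercal\bm{\alpha}$ need not lie in $2\pi\mathbb{Z}$ for a generic real $\bm{\alpha}$. Your resolution --- $\bm{\alpha}'=\bm{\alpha}-\pi\bm{l}$ gives $\bm{x}\cdot\bm{\alpha}'=2\pi p/\det M$ with $p\in\mathbb{Z}$, reality at $\bm{x}$ forces $2p/\det M\in\mathbb{Z}$, and oddness of $\det M$ then upgrades this to $p/\det M\in\mathbb{Z}$ --- is exactly the missing justification, and it invokes the hypothesis $|M|\equiv1\pmod{2}$ a second time, at a point where the paper does not. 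In short, your write-up is a completed, more rigorous version of the paper's argument rather than a different route; what it buys is closing a real gap in the published proof.
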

Note that the lemma considers states $\ket{\psi^\mathrm{t}}$ and
$\ket{\phi^\mathrm{t}}$ that are related by local phase operators, so their
coefficients have the same absolute value, i.e.\
$|\braket{\bm{x}}{\psi^\mathrm{t}}|=|\braket{\bm{x}}{\phi^\mathrm{t}}|$ for
all $\bm{x}$. Moreover, since the coefficients are real they can only differ
by a sign.

\begin{proof}
  The sufficient condition is straightforward, as one can set $\alpha_i=\pi
  l_i$ and $\alpha_0=\tilde{\alpha}_0$. For the necessary condition, note that
  due to condition (\ref{lem:localphase_localZ_real}) we have that
  $\ket{\psi^*}=\ket{\psi}$ and $\ket{\phi^*}=\ket{\phi}$, where
  $\ket{\psi^*}$ denotes the complex conjugate of $\ket{\psi}$ in the
  computational basis. Hence, we can impose the following condition on the
  phases:
  \begin{equation}
    \ket{\phi^\mathrm{t}} = \e^{i2\alpha_0} \bigotimes_i Z_i(2\alpha_i) \ket{\phi^\mathrm{t}}
    \label{eq:cc_condition1}
    .
  \end{equation}
  Since $\braket{\bm{0}}{\phi^\mathrm{t}}\neq0$, we can see that $\alpha_0$
  is fixed to either $0$ or $\pi$. Since the coefficients of 
  $\ket{\phi^\mathrm{t}}$ and $\ket{\psi^\mathrm{t}}$ can differ only by a sign, 
  we have the more general condition
  \begin{equation}
    \alpha_0 + \bm{x}^\intercal\bm{\alpha} = \pi k_{\bm{x}} \ \forall \bm{x}\in
    S_{\phi^\mathrm{t}}
    \label{eq:cc_condition1_phases}
    ,
  \end{equation}
  where we defined $\bm{\alpha}=(\alpha_1,\dots,\alpha_n)^\intercal\in\mathbb{R}^n$,
  and $S_{\phi^\mathrm{t}}=\{\bm{x}:\braket{\bm{x}}{\phi^\mathrm{t}}\neq0\}$.
  Note that $S_{\phi^\mathrm{t}}$ is analogous to the set $\Lambda$ previously
  defined, but for the trace decomposition.
  Using the matrix $M=M_{\ket{\phi^\mathrm{t}}}$ and that
  $\alpha_0\in\{0,\pi\}$,
  Eq.~(\ref{eq:cc_condition1_phases}) can also be rewritten as
  \begin{equation}
    M\bm{\alpha} = \pi\bm{k}.
    \label{eq:cc_condition1_short}
  \end{equation}
  Here $\bm{k}\in\mathbb{Z}_2^N$ is the vector whose entries are
  $k_{\bm{x}}+\frac{\alpha_0}{\pi} \pmod{2}$, with $\bm{x}$ the vectors
  forming the matrix $M$.

  As $M$ is an integer matrix, so is the adjugate matrix
  $A\equiv|M| M^{-1}$ (that is, the transpose of the cofactor matrix,
  $A=C^\intercal$, where each entry $C_{ij}$ is the $i,j$-minor of $M$). Using
  this, Eq.~(\ref{eq:cc_condition1_short}) can be written as
  \begin{equation}
    \frac{|M| \bm{\alpha}}{\pi}
    =
    A\bm{k}
    \pmod{2}
    .
  \end{equation}
  As $M$ is invertible in $\mathbb{Z}_2$, $|M|\equiv1\pmod{2}$, so
  the vector $\bm{\alpha}=\pi\bm{l}$, with
  $\bm{l}\equiv A\bm{k}\pmod{2}\in\mathbb{Z}_2^N$, is a
  solution to the previous equation. This shows that the phases can always be
  chosen to be either $0$ or $\pi$, and hence $\ket{\psi^\mathrm{t}}$ and
  $\ket{\phi^\mathrm{t}}$ are equivalent under the action of $Z$ operators.
\end{proof}
Note that the conditions
$\braket{\bm{0}}{\psi^\mathrm{t}}\neq0$
and
$\braket{\bm{0}}{\phi^\mathrm{t}}\neq0$
can
be relaxed, as long as there exists at least a linearly \emph{dependent}
vector $\bm{x}$ such that the global phase can be fixed to $0$ or $\pi$.

We are now going to use this lemma to derive a simple necessary and sufficient
condition for LU equivalence of certain LME states. To this end, we redefine
generic LME states by adding a second condition that excludes a set of zero
measure. We call an LME state $\ket{\phi}$ generic if it fulfills the
  following two conditions:
(i) none of the single-qubit reduced states are completely mixed (i.e.\
      $\varrho_i\not\propto\mathbb{1}\ \forall i$); and
(ii) let $\ket{\phi}$ be in the $Z$ basis, then there exists a
      $\bm{k}\in\mathbb{Z}_2^n$ such that the state
      $\ket{\phi_{\bm{k}}}=Z^{\bm{k}}\ket{\phi}$ fulfills that
      $\bra{\bm{0}}H^{\otimes n}\ket{\phi_{\bm{k}}}\neq0$ and
      $\bra{\bm{x}}H^{\otimes n}\ket{\phi_{\bm{k}}}\neq0$ for all
      $\bm{x}\in\mathbb{Z}_2^n$ with $|\bm{x}|=1$.
      ($\ket{\bm{x}}$ is a state of the computational basis.)
Nearly all LME states fulfill these two conditions. Note also that for our
purposes condition (ii), which is equivalent to
$M_{H^{\otimes n}\ket{\phi_{\bm{k}}}}=\mathbb{1}$, can be relaxed to 
$|M_{H^{\otimes n}\ket{\phi_{\bm{k}}}}|\equiv1\pmod{2}$.

As we mentioned before, the allowed unitaries that transform $\ket{\phi}$ to
$\ket{\psi}$ can be different from Pauli operators due to the freedom of local
phase operators in the sorted trace decomposition. A generic LME state $\ket{\phi}$
always has a Pauli equivalent state $\ket{\phi_{\bm{k}}}$ that fulfills
conditions (\ref{lem:localphase_localZ_0}) and
(\ref{lem:localphase_localZ_matrix}) of Lemma~\ref{lem:localphase_localZ}.
This leads to the proof of the following theorem:
\begin{thm}
  \label{th:lu_pauli_equiv}
  Let $\ket{\psi}$ and $\ket{\phi}$ be $N$-qubit generic LME states in the $Z$
  basis with real coefficients in the trace decomposition.
  Then $\ket{\psi}\lueq\ket{\phi}$ if and only if $\ket{\psi}$ and
  $\ket{\phi}$ are equivalent under the action of local Pauli operators (in
  the $Z$ basis).
\end{thm}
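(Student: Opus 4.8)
The plan is to prove the two implications separately. The direction ``local Pauli equivalent $\Rightarrow$ LU equivalent'' is immediate, since local Pauli operators are in particular local unitaries. All the work lies in the converse, and the strategy is to feed the LU-equivalence condition for generic LME states (the specialisation of Eq.~(\ref{eq:LUequiv}) derived above) into Lemma~\ref{lem:localphase_localZ}, whose role is precisely to turn the continuous local-phase freedom into a discrete Pauli freedom. Concretely, for generic LME states in the $Z$ basis the condition reads: $\ket{\psi}\lueq\ket{\phi}$ iff there are phases $\{\alpha_i\}_{i=0}^N$ and a bit string $\bm{k}\in\mathbb{Z}_2^N$ with $\bigotimes_i H_i\ket{\psi}=\e^{i\alpha_0}\bigotimes_i Z_i(\alpha_i)X_i^{k_i}H_i\ket{\phi}$, the transforming unitaries being $U_i=X_i(\alpha_i)Z_i^{k_i}$ as in Eq.~(\ref{eq:local_unitaries_c11}). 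Writing $\ket{\psi^\mathrm{t}}=H^{\otimes N}\ket{\psi}$ and $\ket{\phi^\mathrm{t}}=H^{\otimes N}\ket{\phi}$ for the trace decompositions, the whole task is to show that whenever such $\alpha_i$ exist they may be chosen in $\{0,\pi\}$, so that each $X_i(\alpha_i)=H_iZ_i(\alpha_i)H_i$ becomes $\eins$ or $X_i$ and every $U_i$ becomes a Pauli operator.

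First I would use genericity to fix a convenient representative. Since both $\ket{\psi}$ and $\ket{\phi}$ are generic, condition~(ii) supplies bit strings so that, after local Pauli corrections $Z^{\bm{a}}$ and $Z^{\bm{b}}$, the trace decompositions acquire a nonvanishing $\bm{0}$ coefficient and $M_{\cdot}=\eins$; because local Pauli operations compose and I am trying to establish LP equivalence, I may replace $\ket{\psi},\ket{\phi}$ by these representatives and assume from the start that $\ket{\psi^\mathrm{t}}$ (and $\ket{\phi^\mathrm{t}}$) satisfy conditions~(\ref{lem:localphase_localZ_0}) and (\ref{lem:localphase_localZ_matrix}) of Lemma~\ref{lem:localphase_localZ}. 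Next I would absorb the $X$-flips of the LU condition into a Pauli operation on $\ket{\phi}$: setting $\ket{\phi^\sharp}:=X^{\bm{k}}\ket{\phi^\mathrm{t}}=H^{\otimes N}(Z^{\bm{k}}\ket{\phi})$, which is the trace decomposition of the Pauli-equivalent state $Z^{\bm{k}}\ket{\phi}$, the LU condition collapses to the pure local-phase relation $\ket{\psi^\mathrm{t}}=\e^{i\alpha_0}\bigotimes_i Z_i(\alpha_i)\ket{\phi^\sharp}$. Since $X^{\bm{k}}$ merely permutes computational-basis labels, $\ket{\phi^\sharp}$ is still real, so hypothesis~(\ref{lem:localphase_localZ_real}) holds as well.

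The main obstacle is that the forced flip $X^{\bm{k}}$ could a priori destroy the normal-form conditions~(\ref{lem:localphase_localZ_0}) and (\ref{lem:localphase_localZ_matrix}) for $\ket{\phi^\sharp}$, because $\bm{k}$ is dictated by matching the eigenvalue orderings of the single-qubit reduced states and need not agree with the genericity string. The observation that resolves this is that $\ket{\psi^\mathrm{t}}$ and $\ket{\phi^\sharp}$ differ only by local phase operators, hence $|\braket{\bm{x}}{\psi^\mathrm{t}}|=|\braket{\bm{x}}{\phi^\sharp}|$ for every $\bm{x}$; therefore the pattern of vanishing coefficients, and with it the $\bm{0}$ coefficient and the matrix $M$, is identical for the two states, so $\ket{\phi^\sharp}$ inherits conditions~(\ref{lem:localphase_localZ_0}) and (\ref{lem:localphase_localZ_matrix}) from the good state $\ket{\psi^\mathrm{t}}$. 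With all three hypotheses in place, Lemma~\ref{lem:localphase_localZ} yields $\tilde\alpha_0$ and $\bm{l}\in\mathbb{Z}_2^N$ with $\ket{\psi^\mathrm{t}}=\e^{i\tilde\alpha_0}\bigotimes_i Z_i^{l_i}\ket{\phi^\sharp}=\e^{i\tilde\alpha_0}\bigotimes_i Z_i^{l_i}X^{\bm{k}}\ket{\phi^\mathrm{t}}$. Finally I would conjugate back by $H^{\otimes N}$ and use $H Z^{l}X^{k}H=X^{l}Z^{k}$ to obtain $\ket{\psi}=\e^{i\tilde\alpha_0}\bigotimes_i X_i^{l_i}Z_i^{k_i}\ket{\phi}$, which exhibits $\ket{\psi}$ and $\ket{\phi}$ as local Pauli equivalent and closes the argument.
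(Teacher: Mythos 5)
Your proof is correct and follows essentially the same route as the paper's own argument: both reduce to Pauli-corrected representatives, specialize the LU-equivalence criterion for generic LME states to a pure local-phase relation between the trace decompositions, and then invoke Lemma~\ref{lem:localphase_localZ} to discretize the phases into $\{0,\pi\}$. The only difference is bookkeeping: the paper uses genericity condition (ii) to fix the representative on the $\ket{\phi}$ side directly (absorbing the criterion's bit flips into $\ket{\psi_{\bm{k}'}}$), whereas you fix the $\ket{\psi}$ side and recover the lemma's hypotheses for $X^{\bm{k}}\ket{\phi^\mathrm{t}}$ via the support-equality argument --- an argument the paper itself relies on implicitly to guarantee $\braket{\bm{0}}{\psi_{\bm{k}'}^\mathrm{t}}\neq 0$.
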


\begin{proof}
  The sufficient condition is straightforward.
  For the necessary condition, note that $\ket{\phi}\equiv\ket{\phi_{\bm{0}}}$
  is equivalent under Pauli operators to $\ket{\phi_{\bm{k}}}$ for all $\bm{k}$.
  Hence, proving the theorem for any $\ket{\psi_{\bm{k}^\prime}}$ and any
  $\ket{\phi_{\bm{k}}}$ is enough. Recall that
  $\ket{\psi}\lueq\ket{\phi_{\bm{k}}}$ if and only if
   there exists a binary vector $\bm{k}^\prime$ and
  real phases $\{\alpha_i\}_{i=0}^N$ such that 
  (see also Eq.~(\ref{eq:local_unitaries_c11}))
  \begin{equation}
    \ket{\psi_{\bm{k}^\prime}}
    =
    \bigotimes_{i=1}^N Z_i^{k_i^\prime} \ket{\psi}
    =
    \e^{i\alpha_0} \bigotimes_{i=1}^N X_i(\alpha_i) \ket{\phi_{\bm{k}}}
    ,
  \end{equation}
  or equivalently if there exists such vector $\bm{k}^\prime$ and phases
  $\{\alpha_i\}_{i=0}^N$, and
  \begin{equation}
    \ket{\psi_{\bm{k}^\prime}^\mathrm{t}}
    =
    \e^{i\alpha_0} \bigotimes_{i=1}^N Z_i(\alpha_i) \ket{\phi_{\bm{k}}^\mathrm{t}}
    .
  \end{equation}
  We choose $\ket{\phi_{\bm{k}}}$ such that
  $\braket{\bm{0}}{\phi_{\bm{k}}^\mathrm{t}}\neq0$ and
  $M_{\ket{\phi_{\bm{k}}^\mathrm{t}}}=\mathbb{1}$, which is always possible
  due to condition (ii) of generic states. Then, due to
  Lemma~\ref{lem:localphase_localZ} $\ket{\psi_{\bm{k}^\prime}^\mathrm{t}}$
  and $\ket{\phi_{\bm{k}}^\mathrm{t}}$ are equivalent under local $Z$
  operators, i.e.\ $\ket{\psi_{\bm{k}^\prime}}$ and $\ket{\phi_{\bm{k}}}$ are
  equivalent under local $X$ operators. Each of these states are equivalent to
  $\ket{\psi}$ and $\ket{\phi}$ under the actions of local $Z$ operators,
  respectively, which concludes the proof.
\end{proof}

\subsection{LU equivalence of hypergraph states}

Now we are able to show that in many cases, two hypergraph states are 
LU equivalent if and only if they are LP equivalent. That is, if and 
only if their associated hypergraphs are related by a sequence of the 
hypergraph transformations studied in Section~\ref{sec:graph_trafos}. 
We identify some conditions that have to be fulfilled by two $N$-qubit
hypergraph states so that they are LU equivalent if and only if they are LP
equivalent.

A hypergraph state in the computational basis has only real coefficients,
\begin{equation}
  \ket{\psi} = \frac{1}{\sqrt{2^N}} \sum_{\bm{x}\in\mathbb{Z}_2^N} (-1)^{f(\bm{x})} \ket{\bm{x}}
  ,
  \label{eq:hgs_Zbasis}
\end{equation}
where
\begin{equation}
  f(\bm{x})=\bigoplus_{e\in E}\prod_{i\in e}x_i
  \label{eq:booleanf}
\end{equation}
is a boolean function expressed as a sum of polynomials (modulo 2) that
depends on the hypergraph structure (and we define $\prod_{i\in\emptyset}x_i\equiv1$ 
for the empty edge). The degree $d$ of such a polynomial
representation is defined as $d=\max\{|e|:e\in E\}$, where $|e|$ stands for
the cardinality of the edge $e$. Hence, $d$ equals the cardinality of the
largest edge that appears in the graph.

If $\ket{\psi}$ is a generic LME state, 
then the unitaries bringing it to the trace decomposition are $H^{\otimes N}$. 
This leads to the following corollary:
\begin{cor}
  Let $\ket{\psi}$ and $\ket{\phi}$ be $N$-qubit hypergraph states that 
  belong to the family of generic LME states [see conditions (i) and (ii) 
  of the definition]. Then $\ket{\psi}\lueq\ket{\phi}$ if and only if 
  $\ket{\psi}$ and $\ket{\phi}$ are equivalent under the action of local 
  Pauli operators.
\end{cor}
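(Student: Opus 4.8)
The plan is to show that every $N$-qubit hypergraph state obeying conditions (i) and (ii) is a special instance of the generic LME states already covered by Theorem~\ref{th:lu_pauli_equiv}, so that the corollary follows by direct specialization. Concretely, I would verify the two hypotheses of that theorem for $\ket{\psi}$ and $\ket{\phi}$: that both are generic LME states written in the $Z$ basis, and that their trace decompositions have real coefficients. Once these are checked, the theorem delivers the claim verbatim, so essentially no new argument is required.

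First I would observe that, by Eq.~(\ref{eq:hgs_Zbasis}), any hypergraph state has purely real amplitudes $(-1)^{f(\bm{x})}/\sqrt{2^N}$. Since $Y_i$ is a purely imaginary Hermitian operator, the expectation value $\avg{Y_i}$ in a real-amplitude state is at once real (being an expectation of a Hermitian observable) and purely imaginary, hence $\avg{Y_i}=0$ for every qubit $i$. By the prescription $\cot(\beta_i)=\avg{X_i}/\avg{Y_i}$ recalled above, this forces $\beta_i=0$, so a hypergraph state is automatically an LME state already in the $Z$ basis, with no preliminary phase rotation needed. Together with the standing hypothesis of genericity (conditions (i) and (ii)), this places $\ket{\psi}$ and $\ket{\phi}$ squarely in the class of generic LME states in the $Z$ basis.

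Second, since the state is a generic LME state in the $Z$ basis, the unitary bringing it to its trace decomposition is $H^{\otimes N}$. As the Hadamard matrix $H$ has only real entries and $\ket{\psi}$ has real coefficients, the trace decomposition $H^{\otimes N}\ket{\psi}$ again has real coefficients, and likewise for $\ket{\phi}$. Both hypotheses of Theorem~\ref{th:lu_pauli_equiv} are therefore satisfied, and the equivalence $\ket{\psi}\lueq\ket{\phi}$ if and only if $\ket{\psi}$ and $\ket{\phi}$ are local-Pauli equivalent follows immediately.

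The entire substance of the argument is carried by Theorem~\ref{th:lu_pauli_equiv}; the only work here is the elementary verification that hypergraph states fall within its scope. The single point that deserves care is the assertion that $H^{\otimes N}$ is the correct trace-decomposition unitary and preserves reality, which relies on the state being genuinely in the $Z$ basis (established via $\avg{Y_i}=0$) and on $H$ being a real matrix. I do not expect any genuine obstacle beyond these two routine checks.
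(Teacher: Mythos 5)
Your proposal is correct and takes essentially the same route as the paper: the paper also derives this corollary by observing that hypergraph states have real coefficients in the computational ($Z$) basis, that the trace decomposition of a generic LME state is reached by the real unitary $H^{\otimes N}$ (so reality is preserved), and then specializing Theorem~\ref{th:lu_pauli_equiv}. Your explicit check that $\avg{Y_i}=0$ forces $\beta_i=0$, so that a hypergraph state is automatically in the $Z$ basis, is a detail the paper leaves implicit but is exactly the intended justification.
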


Note that by far not all hypergraph states fulfill the condition
stated in the corollary. In fact, for the four-qubit hypergraphs 
in Fig.~(\ref{hg-bild-vier}) only 14 of 27 
fulfill the condition that none of the reduced density matrices is
maximally mixed. Further notable exceptions to the first condition 
of generic states are the connected graph states (with 
$\varrho_i\propto\mathbb{1}$  for all $i$) and other $k$-uniform 
hypergraph states, such as the states from Table~\ref{6qbtable}. 
Hence, the set of connected graph states, for which in general 
local Clifford operations are not sufficient to characterize LU 
equivalence, is excluded. The second condition allows us to ignore 
some problematic states, where the Lemma 3 cannot be applied.

We have explained above that two $k$- and
$k^\prime$-uniform hypergraph states are not 
equivalent under the action of local Pauli 
operators unless they are the same.  
So we obtain the following corollary:

\begin{cor}
  \label{cor:uniform_not_LU}
  Let $\ket{\psi}$ and $\ket{\phi}$ be two different uniform $N$-qubit 
  hypergraph states that  belong to the family of generic LME states.
  Then $\ket{\psi}$ is not LU equivalent to $\ket{\phi}$.
\end{cor}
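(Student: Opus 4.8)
The plan is to reduce the claim about local unitary (LU) equivalence to a purely combinatorial statement about the underlying hypergraphs. Since $\ket{\psi}$ and $\ket{\phi}$ are hypergraph states that belong to the family of generic LME states, the preceding corollary (itself a consequence of Theorem~\ref{th:lu_pauli_equiv}) applies and tells us that $\ket{\psi}\lueq\ket{\phi}$ holds if and only if the two states are local Pauli (LP) equivalent. It therefore suffices to show that two \emph{different} uniform hypergraph states can never be LP equivalent; the non-equivalence under LU then follows immediately.

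For the LP part I would exhibit a complete invariant of the edge set under local Pauli operations, namely the collection $E_{\max}$ of edges of maximal cardinality $d=\max\{|e|:e\in E\}$. Using the transformation rule of Section~\ref{sec:graph_trafos}, a single $X_k$ replaces the edge set $E$ by $E\,\Delta\,E^{(k)}$ with $E^{(k)}=\{e\setminus\{k\}:e\in E(k)\}$; every edge of $E^{(k)}$ has cardinality one smaller than its parent, so for $d\ge2$ the set $E^{(k)}$ contains no $d$-edge. Restricting the symmetric difference to cardinality $d$ thus leaves the $d$-edges exactly as before, and the top cardinality stays equal to $d$. A $Z_k$ only toggles the single-vertex edge $\{k\}$, and $Y_k=X_kZ_k$ up to a phase, so neither touches $E_{\max}$ either. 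To promote this from one gate to an arbitrary Pauli word I would run a short induction on its length: after the first $X_k$ the hypergraph is no longer uniform, but the active $E^{(k)}$ at every later step still has cardinality at most $d-1$, so the $d$-edges survive throughout. The only delicate bookkeeping concerns the low-cardinality debris, since $Z_k$ can create the edge $\{k\}$ and $X_k$ acting on it can produce the empty edge (a global sign); because we assume $d\ge2$ and the paper agrees to neglect single-vertex and empty edges, these can never be mistaken for or interfere with the $d$-edges, which is the one place a careless argument would break.

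With $E_{\max}$ established as an LP-invariant the conclusion is immediate. For a uniform hypergraph state every edge is of maximal cardinality, so $E_{\max}=E$ is the whole edge set. If the $k$-uniform state $\ket{\psi}$ and the $k'$-uniform state $\ket{\phi}$ were LP equivalent, their maximal-cardinality edge sets would coincide, forcing $k=k'$ and then $E=E'$; but this says the two hypergraph states are identical, contradicting the assumption that they are different. Hence they are not LP equivalent, and by the reduction of the first paragraph they are not LU equivalent.
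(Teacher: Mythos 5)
Your proof is correct and follows essentially the same route as the paper: the paper likewise invokes the preceding corollary (LU equivalence reduces to local Pauli equivalence for generic LME hypergraph states) and combines it with the observation from Section~\ref{sec:graph_trafos} that the edges of maximal cardinality are invariant under local Pauli operations, so uniform hypergraph states are Pauli-equivalent only if identical. Your explicit induction over the Pauli word and the bookkeeping of the low-cardinality debris merely spell out what the paper leaves as a remark.
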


Finally, we are able to identify a large class of states where Corollary 5
can be applied.
More precisely, we show that if two hypergraphs contain 
the maximal edge $\{1,\dots,N\}$, then the two
corresponding hypergraph states are LU equivalent if and only if they are LP
equivalent. In order to do so, consider again the boolean function $f(\bm{x})$ defined in
Eq.~(\ref{eq:booleanf}). The quantity $(-1)^{f(\bm{x})}$, that defines the
amplitudes of the hypergraph state, can be conveniently rewritten as
\begin{equation}
  (-1)^{f(\bm{x})} = 1-2f(\bm{x}) \equiv f_\pm(\bm{x})
  .
  \label{eq:coeff_f}
\end{equation}
The Hadamard transform of $f(\bm{x})$ is defined as
\begin{equation}
  \hat f (\bm{w})=\frac{1}{2^N} \sum_x f(\bm{x})(-1)^{\bm{x}^\intercal\bm{w}}.
\end{equation}
The Hadamard transform of $f_\pm$, denoted by $\hat{f}_\pm$, gives the
coefficients of the hypergraph state in the trace decomposition, i.e.
\begin{align}
  \hat{f}_\pm(\bm{w})
  &= \frac{1}{2^N} \sum_{\bm{x}} (-1)^{f(\bm{x})}(-1)^{\bm{x}^\intercal\bm{w}}
  \nonumber\\
  &= \frac{1}{2^N} \sum_{\bm{x}} \left[ 1-2f(\bm{x}) \right](-1)^{\bm{x}^\intercal\bm{w}}
  \nonumber\\
  &= \delta_{\bm{w},\bm{0}}-2\hat{f}(\bm{w})
  \label{ht}
  ,
\end{align}
since the linear function $\bm{x}^\intercal\bm{w}$ of $\bm{x}$ is always
balanced as long as $\bm{w}\neq\bm{0}$.

The cardinality of the support of $\hat{f}$, i.e.\ the number of strings
$\bm{w}$ such that $\hat{f}(\bm{w})\neq0$, is bounded from below as
$|\supp(\hat{f})|\geq 2^{d}$ \cite{degree}. Thus, using Eq.~(\ref{ht}), it
follows that $|\supp(\hat f_\pm)|\geq 2^{d}-1$, as $\hat f_\pm(\bm{w})$ is
proportional to $\hat f(\bm{w})$ whenever $\bm{w}\neq\bm{0}$.
If the hypergraph contains the largest edge of cardinality $N$, then the
associated boolean function $f$ has degree $d=N$. Furthermore, the cardinality
of $\supp(f)$, that we denote by $F\equiv|\supp(f)|$, is always an odd
number~\cite{Qu2013_entropic}. This can easily seen by noting that every $C_e$
different from the one acting on all qubits contains an even number of $-1$,
and hence also the product of different $C_e$ does. Multiplying this product
with the operator $C_V$ on all qubits leads then to an operator with an odd
number of $-1$. As a consequence, the function $f$ is never balanced. It is
also easy to see that the off-diagonal term of the single-qubit reduced state
$\varrho_i$ is \cite{Qu2013_entropic}
\begin{align}
  (\varrho_i)_{01}
  &= \frac{1}{2^N} \sum_{\bm{x}\in\mathbb{Z}_2^{N_i}} (-1)^{f^{(i)}(\bm{x})}
  \nonumber\\
  &= \frac{1}{2^N} \left( 2^{N-1}-2F^{(i)} \right)
  ,
\end{align}
where we defined $N_i$ as the number of neighbours of vertex $i$, $f^{(i)}$ as
the boolean function $f^{(i)}=\bigoplus_{e\in E^{(i)}}\prod_{j\in e}x_j$, and
$F^{(i)}$ as the cardinality of its support, i.e.\ $F^{(i)}=|\supp(f^{(i)})|$. By
the same reasoning as before, $f^{(i)}$ corresponds again to a hypergraph
which contains the maximal edge (here of cardinality $N-1$), and is thus also
not balanced and $F^{(i)}\neq2^{N-2}$ for any $i$.
Hence, none of the single-qubit reduced states is proportional to the identity.
Moreover, $f$ not balanced also implies that $\hat f_\pm (\bm{0})\neq 0$, and
therefore $|\supp(\hat f_\pm)|=2^{N}$, that is $\hat f_\pm$ is never vanishing. This means
that a hypergraph state with the edge of maximal cardinality is a generic LME
state, which leads to the corollary:
\begin{cor}
  Let $\ket{\psi}$ and $\ket{\phi}$ be $N$-qubit hypergraph states such that
  the associated hypergraphs contain the edge of maximum cardinality $N$.
  Then $\ket{\psi}$ and $\ket{\phi}$ are generic LME states and
  $\ket{\psi}\lueq\ket{\phi}$ if and only if $\ket{\psi}$ and $\ket{\phi}$ are
  equivalent under the action of local Pauli operators (in the $Z$ basis).
\end{cor}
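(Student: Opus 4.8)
The plan is to reduce everything to Theorem~\ref{th:lu_pauli_equiv}: once I have verified that both $\ket{\psi}$ and $\ket{\phi}$ are generic LME states in the $Z$ basis with real coefficients in the trace decomposition, the stated ``LU iff LP'' equivalence is an immediate instance of that theorem. The discussion preceding the statement already assembles the analytic ingredients, so the work is to organize them into a verification of these hypotheses. Two of them come for free: a hypergraph state has real amplitudes $(-1)^{f(\bm{x})}/\sqrt{2^N}$ in the computational basis, and since $\avg{Y_i}=0$ one has $\beta_i=0$, so the state is already in the $Z$ basis and its trace decomposition is $H^{\otimes N}\ket{\psi}$; because $H$ is real orthogonal, the trace-decomposition coefficients $\hat{f}_\pm(\bm{w})$ are real. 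It therefore remains to check the two genericity conditions (i) and (ii).

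For condition (i) I would show that no single-qubit reduced state is maximally mixed. Writing $\varrho_i=\tfrac{1}{2}\mathbb{1}+(\varrho_i)_{01}X_i$ with $(\varrho_i)_{01}=\frac{1}{2^N}(2^{N-1}-2F^{(i)})$, the state is maximally mixed exactly when $F^{(i)}=2^{N-2}$, i.e.\ when the boolean function $f^{(i)}$ is balanced. Here I would invoke the parity argument: the hypergraph $E^{(i)}$ contains the edge $\{1,\dots,N\}\setminus\{i\}$ of cardinality $N-1$, and for such hypergraphs the support $F^{(i)}$ is always odd, hence cannot equal the even number $2^{N-2}$. Thus $(\varrho_i)_{01}\neq0$ for every $i$ and condition (i) holds.

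For condition (ii) I would show that the trace decomposition has full support, so that $M_{\ket{\psi^\mathrm{t}}}$ is a permutation matrix (hence invertible over $\mathbb{Z}_2$), which is the relaxed form of condition (ii). Since the largest edge has cardinality $N$, the degree of $f$ is $d=N$; the lower bound $|\supp(\hat{f})|\geq 2^{d}=2^N$ then forces $\hat{f}$ to be nowhere vanishing, and via $\hat{f}_\pm(\bm{w})=-2\hat{f}(\bm{w})$ the same holds for $\hat{f}_\pm$ at every $\bm{w}\neq\bm{0}$. For $\bm{w}=\bm{0}$ I would use $\hat{f}_\pm(\bm{0})=1-2\hat{f}(\bm{0})$ together with the oddness of $F=|\supp(f)|$ (again the parity argument, now for the full hypergraph) to get $\hat{f}_\pm(\bm{0})\neq0$. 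Hence all $2^N$ coefficients are nonzero; in particular the $\bm{0}$ component and all weight-one components are nonzero, which is exactly condition (ii) with $\bm{k}=\bm{0}$.

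With both states confirmed to be generic LME states in the $Z$ basis with real trace-decomposition coefficients, I would close the argument by applying Theorem~\ref{th:lu_pauli_equiv} (equivalently, its hypergraph-state corollary), which yields $\ket{\psi}\lueq\ket{\phi}$ if and only if the two states are local-Pauli equivalent. The only genuinely non-routine inputs are the two cited support facts --- the degree bound $|\supp(\hat{f})|\geq 2^{d}$ and the oddness of the support of a hypergraph boolean function containing a top-cardinality edge --- and the one structural point I would be careful about is that these two facts must be combined in the right way: the degree bound controls the coefficients away from $\bm{0}$, while the parity argument is needed separately both for the $\bm{0}$ coefficient (condition (ii)) and for each off-diagonal $(\varrho_i)_{01}$ (condition (i)). Everything else is bookkeeping matching the hypotheses of the earlier results.
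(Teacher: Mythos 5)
Your proof is correct and follows essentially the same route as the paper: verifying genericity condition (i) via the off-diagonal formula $(\varrho_i)_{01}=\frac{1}{2^N}(2^{N-1}-2F^{(i)})$ and the parity (odd-support) argument applied to each $f^{(i)}$, verifying condition (ii) via the degree bound $|\supp(\hat{f})|\geq 2^{d}=2^N$ together with the same parity argument for $\hat{f}_\pm(\bm{0})$, and then invoking Theorem~\ref{th:lu_pauli_equiv} and its hypergraph corollary. The two external facts you flag (the Hadamard-support degree bound and the oddness of the support) are exactly the inputs the paper uses, combined in the same way.
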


\section{Nonclassicality of hypergraph states}

In this section, we will investigate whether the correlations present 
in hypergraph states can be used to demonstrate contradictions between
quantum physics and the classical world view. Classical models obey 
by definition the constraint of realism: Any observable has a given
value and this value is independent of whether the observable is 
indeed measured or not. In order to arrive at a contradiction to quantum
mechanics, further assumptions on a classical model are needed. One 
possible assumption is noncontextuality. This means that the value
of one measurement is independent of which other compatible measurement
is carried out jointly with it. To give an example, one may consider the
situation where the measurements $A$ and $B$ commute, and $A$ and $C$
commute. Then, the outcome of $A$ is assumed to be independent  of 
whether it is measured together with $B$ or with $C$.
A different type of 
assumption is the locality assumption. Here, one considers a system 
consisting of several particles and demands that an observable on one 
particle has a value that is independent of other observables at distant 
particles, measured at the same time. Clearly, the locality assumption can 
be viewed as the noncontextuality assumption applied to special observables. 

Both assumptions are in conflict with the quantum mechanical prediction. 
For noncontextuality, this statement is known as the Kochen-Specker theorem 
\cite{Specker60, KS67}, while for locality this fact is called Bell's theorem \cite{Bell64}. It is 
also known that for usual graph states the stabilizer formalism can be used
to derive arguments and inequalities to demonstrate these contradictions. For
locality, the most famous of the arguments is the GHZ reasoning \cite{GHZargument},
but a plethora of further results exists \cite{merminineq,S04,GTHB04}. For noncontextuality, the most known argument
is the Mermin star \cite{Mermin93}, but recently many other contradictions have been 
systematically investigated \cite{Cabello08, morde}. We will show now that similar contradictions
can be found employing the nonlocal stabilizer of hypergraphs. We will first derive 
an argument and an inequality for noncontextuality, then we will show that 
it can be rephrased as a Bell inequality.

\subsection{A GHZ-like argument for the hypergraph formalism}
To start, let us recall the GHZ argument and the Mermin inequality
for the three-qubit GHZ state. The three-qubit GHZ state is a graph 
state, where the graph is the fully connected graph. Its stabilizing
operators are therefore given by $g_1 = X_1 Z_2 Z_3$, $g_2 = Z_1 X_2 Z_3$, 
and $g_3 = Z_1 Z_2 X_3$. Then one considers the operator
\begin{align}
\MM &= g_1 + g_2 + g_3 + g_1 g_2 g_3
\label{mermin1}
\\
& = X_1 Z_2 Z_3 + Z_1 X_2 Z_3 + Z_1 Z_2 X_3 - X_1 X_2 X_3.
\label{mermin2}
\end{align}
The GHZ state is an eigenstate with the eigenvalue $+1$ for all 
the operators in the sum in Eq.~(\ref{mermin1}), so the expectation value for this state
is $\mean{\MM}=4.$ On the other hand, if one considers the $X_i$ 
and $Z_i$ in the second line as classical quantities with values 
$\pm 1$, the product of the first three terms in Eq.~(\ref{mermin2}) 
equals the last term, up to the sign. Therefore it is impossible to 
assign values to the $X_i$ and $Z_i$ such that all four terms take 
the value $+1$. It follows that classical models have to obey the 
constraint $\mean{\MM}\leq 2$, which is known as Mermin inequality. 

The question arises, whether a similar argument can be derived 
from the hypergraph formalism. Since the hypergraph stabilizer 
consists of nonlocal operators, such an argument will, in the first
place, lead to a noncontextuality inequality. Later we will discuss
whether this can be translated to a Bell inequality.

To derive an inequality in the hypergraph formalism, we aim at 
finding a hypergraph state whose stabilizer operators fulfill the following relation:
\be
\sum_{i=1}^{N} g_i  + \prod_{i=1}^N g_i = 
\sum_{i=1}^{N} X_i \big(\prod_{e \in E(i)} C_{e\setminus\{i\}}\big)
- \prod_{i=1}^N X_i
.
\label{ansatzghz}
\ee
In this case, any classical assignment of $\pm 1$ to the $X_i$ as well 
as to the operators $C_{e\setminus\{i\}}$ (which commute) cannot reproduce the quantum mechanical 
predictions. In order to make this ansatz work, we consider the subset 
of $k$-uniform fully connected hypergraph states. These are the hypergraphs 
where any possible $k$-edge is present. An example is the graph No.~14 in 
Fig.~\ref{hg-bild-vier}, which is the fully connected three-uniform 
four-vertex graph.

\begin{figure}[t]
\begin{center}
\includegraphics[width=0.9\textwidth]{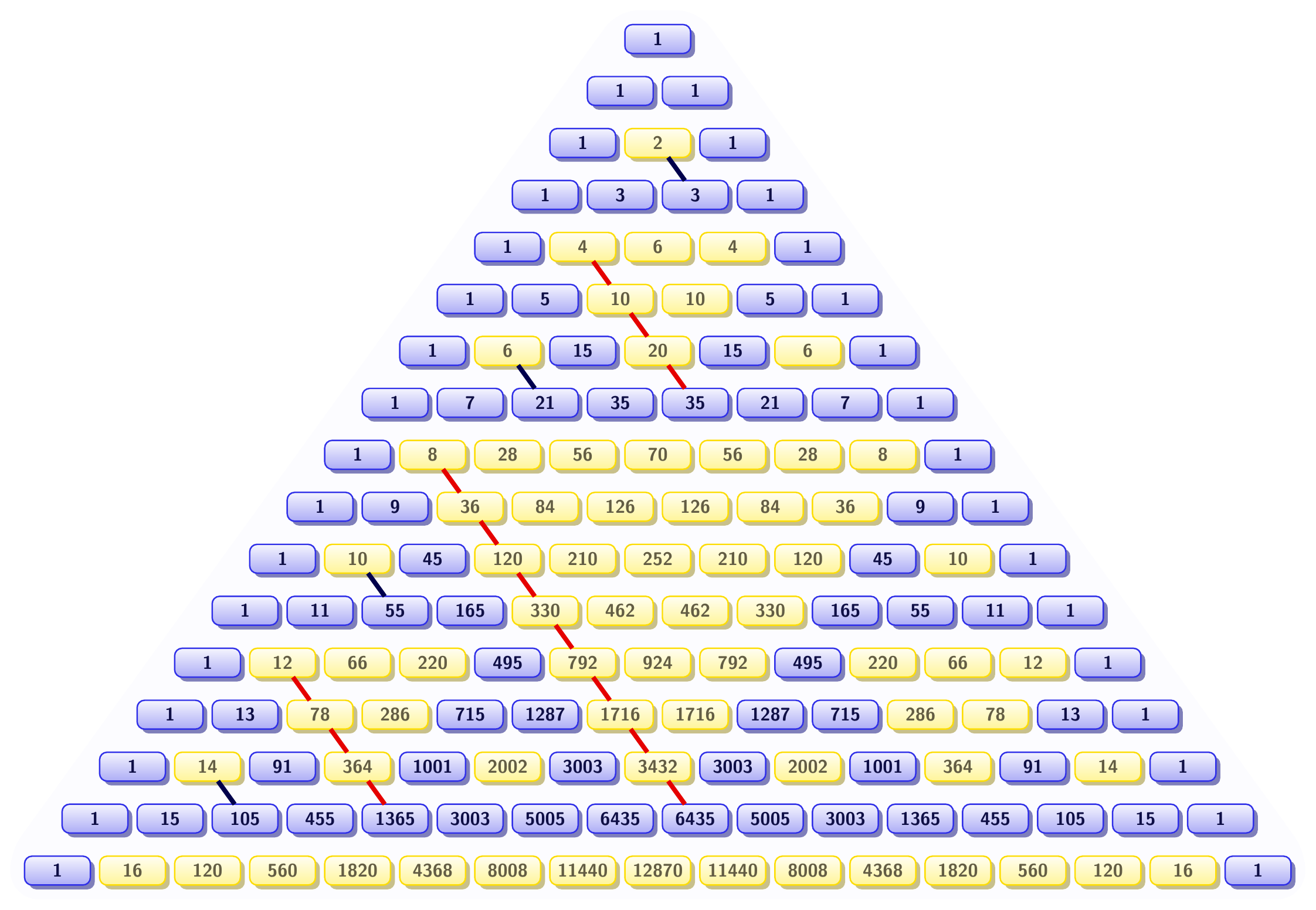}
\end{center}
\caption{Pascal's triangle allows for a visualization of the
conditions for obtaining a noncontextuality inequality in
the hypergraph formalism. For that, one has to find a tuple
$(N,k)$ where ${N \choose k}$ is odd, while ${N-\alpha \choose k-\alpha}$ 
is even for any $\alpha < k$. Obvious solutions are of the type 
$(N,k)=(3+4r,2)$, denoted by black lines. These correspond to the 
usual Mermin argument for standard graph states. Examples of 
solutions for
hypergraph states are $(7,4)$, $(15,4)$, or $(15,8)$, see the lower ends of
the red 
lines. The structure of odd and even binomial coefficients (marked 
in blue and yellow, respectively) is known to approximate 
the fractal of the Sierpi\'nski triangle. From this one
directly finds an arbitrary number of noncontextuality 
inequalities [e.g., for  $(N,k)=(31,16)$]. The figure
was generated with code from Ref.~\cite{figure}. 
}
\label{triangle}
\end{figure}

There are several conditions which have to be fulfilled by the stabilizer
generated by $\{g_i\}$ in order to fulfill Eq.~(\ref{ansatzghz}):

\begin{enumerate}

\item First, if we want to apply the same trick as in the 
GHZ argument, the product of the first terms on the right-hand 
side of Eq.~(\ref{ansatzghz}) has to equal the last term, up 
to a sign. This means that in this product each  $C_{e\setminus\{i\}}$ 
has to occur an even number of times. For $k$-uniform fully connected 
hypergraphs each of the $C_{e\setminus\{i\}}$ occurs $N-(k-1)$ times, 
so this number has to be even.

\item 
For the last term on the right-hand side of Eq.~(\ref{ansatzghz}) 
there has to be a minus sign. We write
$\prod_{i=1}^N g_i = (\prod_{e \in E} C_e) (\prod_{i=1}^N X_i) (\prod_{e \in E} C_e)$
and with the help of Lemma~2 
one finds that pulling each $C_e$ 
from the left to the right results in a minus sign. It follows that the total number 
of $C_e$ must be odd, so ${N \choose k}$ must be odd.

\item Finally, when applying Lemma 2 to  the term 
$\prod_{i=1}^N g_i$, all the further $C_j$ for $j \neq \emptyset$ 
and $j \neq e$ have to cancel. If $j=\{j_1\}$  consists of a 
single element, the term $C_j$ arises from a $C_e$ if $j_1 \in e$. 
For fully-connected $k$-uniform hypergraphs, this happens 
${N-1 \choose k-1}$ times, so this number must be even. By
considering $j=\{j_1, j_2\}$ one finds that ${N-2 \choose k-2}$ 
must be even. In general, any ${N-\alpha \choose k-\alpha}$ must 
be even for any $\alpha < k$. Note that for  $\alpha=k-1$ this 
implies the first condition explained above.
\end{enumerate}

Considering the possible values for $N$ and $k$ fulfilling the conditions
above, one finds the 
first possible choice to be $(N,k)=(3,2)$, which is the original 
Mermin inequality (see Fig.~\ref{triangle}). Further examples are 
$(N,k)=(3+4r,2)$. These are, however, just known Bell inequalities 
for usual graph states. The first example for a hypergraph is
the $(N,k)=(7,4)$, i.e.\ a 7-qubit 4-uniform fully connected hypergraph
state. This leads to the operator 
\begin{align}
\MM & = \sum_{i=1}^{7} g_i  + \prod_{i=1}^7 g_i 
= 
\sum_{i=1}^{7} X_i \big(\prod_{e \in E(i)} C_{e\setminus\{i\}}\big) -
\prod_{i=1}^7 X_i.
\label{ghz74}
\end{align}
Note that here each $(\prod_{e \in E(i)} C_{e\setminus\{i\}})$ 
contains 20 operators $C_f$. For any classical model which assigns 
to the $X_i$ and $C_{e\setminus\{i\}}$ the values $\pm 1$, we have 
$\mean{\MM} \leq 6$, but for the corresponding hypergraph state, 
we have $\mean{\MM} = 8.$ This shows that the hypergraph formalism 
can be used to develop novel Kochen-Specker inequalities. 

For an experimental test of this inequality, the experimenter should
measure each of the eight correlation terms from Eq.~(\ref{ghz74}). For
each of them, he has to prepare the corresponding hypergraph
state and then measure jointly or in a sequence the observables $X_i$ 
and $C_{e\setminus\{i\}}$, which is possible, as these observables 
commute. The results are multiplied and averaged over many repetitions
of the experiment, to compute $\mean{\MM}$.

Two remarks should be added here. First, it is remarkable that
the conditions to obtain a noncontextuality argument lead to 
the fractal structure of the Sierpi\'nski triangle (see Fig.~\ref{triangle}) 
and from this one can directly read of an infinite number of further 
noncontextuality arguments. In fact, one can directly see that if we 
take an arbitrary $r \geq 1$ and $s \geq 0$ and choose
\be
k= 2^{r} \mbox{ and } N=(2^{r+1}-1) + 2 s k,
\ee
the corresponding $(N, k)$ fully-connected $k$-uniform hypergraph 
state can be used to derive a noncontextuality inequality in the same way as
before.

Second, we introduce for the case $(N,k)=(7,4)$ the observables
\be
A_i = \prod_{e \;{\rm with } \; i \in e \; {\rm and } \; |e|=3 } C_{e}
\ee
as the product of all possible $C_e$ where $e$ are 3-edges 
that contain $i$. Then we have 
$A_2 A_3 A_4 A_5 A_6 A_7 =\prod_{e \in E(1)} C_{e\setminus\{1\}}$,
so the product of the six $A_i$ contains all $C_e$, where $e$ 
are 3-edges that do {\it not} contain $i=1$. Then we can write
$g_1 = X_1 A_2 A_3 A_4 A_5 A_6 A_7$ and the noncontextuality 
inequality takes the form
\be
\mean{\MM} = \mean{X_1 A_2 A_3 A_4 A_5 A_6 A_7} + \mbox{ permutations }
- \mean{X_1 X_2 X_3 X_4 X_5 X_6 X_7} \leq 6,
\ee
while for the hypergraph state we have $\mean{\MM}=8.$ This form shows
the close relation to the original Mermin inequality in 
Eq.~(\ref{mermin2}). Note, however, that this is only a formal similarity
and that in this form the $A_i$ are not the appropriate observables to 
measure, since $X_i$ and $A_j$ for $i\neq j$ do not commute.

\subsection{Bell inequalities in the hypergraph formalism}

So far, we have shown how the hypergraph formalism can be used 
to derive noncontextuality inequalities for nonlocal observables. 
It would be very desirable, however, to find also Bell inequalities
for local observables. To demonstrate that this is indeed possible, 
one can decompose the nonlocal observables into local ones, and can 
try to show that a local assignment of classical values for these 
local observables cannot lead to the same result as quantum physics.

To start, let us consider Eq.~(\ref{ghz74}). We can write the first
term in $\MM$ as
\begin{align}
g_1 & = X_1 \otimes \big(\prod_{e \in E(1)} C_{e\setminus\{1\}}\big) 
\nonumber
\\
&=\frac{1}{128}\big[
X_1 \otimes \big(
48 \cdot \eins^{\otimes{6}}
+16 \cdot [Z_2 Z_3 \eins \eins \eins \eins + \mbox{ permutations }]
\nonumber
\\
&
-16 \cdot [Z_2 Z_3 Z_4 Z_5 \eins \eins + \mbox{ permutations }]
+ 80 \cdot Z^{\otimes{6}}
\big)
\big],
\end{align}
and the other terms can be represented similarly. In this way, we can express the
correlation $\MM$ only in terms of local measurements $X_i$ and $Z_i$. The question 
is, whether the bound $\mean{\MM}\leq 6$ still holds, if we assign to these 
local observables the values $\pm 1$ in a local manner. 

This is indeed the case and it can be seen as follows: Any local assignment 
of $\pm 1$ to three observables $Z_i, Z_j,$ and $Z_k$ gives a value for the 
nonlocal observable $C_{\{i,j,k\}}$ as this can be expressed as a classical
function in terms of the $Z_i, Z_j,$ and $Z_k$.
Moreover, one can directly check that from a local assignmen to
$Z_i, Z_j,$ and $Z_k$ the observable $C_{\{i,j,k\}}$ can only take the 
values $\pm 1$. Therefore, a local assignment to the $Z_i$ leads to a 
noncontextual assignment of $C_{e\setminus \{i\}}$ in Eq.~(\ref{ghz74})
and the bound $\mean{\MM}\leq 6$ holds. In this way, all the noncontextuality inequalities from the previous section can be interpreted as Bell inequalities, 
if the nonlocal observables $C_e$ are decomposed into local observables~$Z_i$.

Finally, we note that it is not clear how to construct a Bell or 
noncontextuality inequality with the help of the stabilizer for 
arbitrary hypergraph states.  The simple method for graph states 
from Ref.~\cite{GTHB04} is not successful, since multiplying the elements
of the stabilizer of the hypergraph state does not necessarily result in 
different signs. We believe that finding more general Bell inequalities 
for hypergraph states is a topic worth of further investigation.

\section{Conclusions}
In conclusion we have investigated the entanglement properties 
and nonclassical features of hypergraph states. We have characterized
the local unitary equivalence classes up to four-qubits and have provided general
conditions under which the local unitary equivalence of hypergraph states can simply
be decided by considering the finite set of local Pauli transformations. 
Finally, we have shown that the stabilizer formalism of hypergraph states can be
used to derive various inequalities for testing the Kochen-Specker theorem or Bell's
theorem.

There are many questions which are worth to be addressed in the future. First, 
it would be highly desirable to find applications of hypergraph states, e.g. 
for quantum error correction. Second, it would be interesting to see whether 
the stabilizer formalism of hypergraph states can also be used to derive
Bell inequalities with an increasing violation. Finally, proposals for the
experimental generation of hypergraph states with photons or ions are still 
missing.

\begin{table}[h]
\begin{center}
\begin{tabular}{|c|l|}
\hline
Class & Edges 
\\
\hline
\hline
1 & \scriptsize \{\{1, 2, 3\}, \{1, 2, 4\}, \{1, 2, 5\}, \{1, 2, 6\}, \{1, 3, 4\}, \{1, 3, 5\}, \{1,
   3, 6\}, \{2, 3, 4\}, \{2, 3, 5\}, 
   \{2, 3, 6\}\}
\\[-0.1cm]
2 & \scriptsize \{\{1, 2, 3\}, \{1, 2, 4\}, \{1, 2, 5\}, \{1, 2, 6\}, \{1, 3, 4\}, \{1, 3, 5\}, \{1,
   4, 6\}, \{2, 3, 4\}, \{2, 3, 5\},  \{2, 4, 6\}\}
\\[-0.1cm]
3 & \scriptsize  \{\{1, 2, 3\}, \{1, 2, 4\}, \{1, 3, 5\}, \{1, 4, 6\}, \{1, 5, 6\}, \{2, 3, 6\}, \{2,
   4, 5\}, \{2, 5, 6\}, \{3, 4, 5\}, \{3, 4, 6\}\}
\\[-0.1cm]
4 & \scriptsize  \{\{1, 2, 3\}, \{1, 2, 4\}, \{1, 2, 5\}, \{1, 2, 6\}, \{1, 3, 4\}, \{1, 4, 5\}, \{1,
   4, 6\}, \{2, 3, 4\}, \{2, 4, 5\},  \{2, 4, 6\}, \\[-0.2cm]
   & \scriptsize\{3, 5, 6\}\}
\\[-0.1cm]
5 & \scriptsize  \{\{1, 2, 3\}, \{1, 2, 4\}, \{1, 2, 5\}, \{1, 2, 6\}, \{1, 3, 4\}, \{1, 3, 5\}, \{1,
   3, 6\}, \{1, 4, 5\}, \{2, 3, 4\},  \{2, 3, 5\},\\[-0.2cm]
   & \scriptsize \{2, 3, 6\}, \{2, 4, 5\}\}
\\[-0.1cm]
6 & \scriptsize  \{\{1, 2, 3\}, \{1, 2, 4\}, \{1, 2, 5\}, \{1, 2, 6\}, \{1, 3, 4\}, \{1, 3, 5\}, \{1,
   4, 5\}, \{1, 4, 6\}, \{2, 3, 4\}, \{2, 4, 5\}, \\[-0.2cm]
   & \scriptsize \{2, 4, 6\}, \{3, 5, 6\}\}
\\[-0.1cm]
7 &  \scriptsize \{\{1, 2, 3\}, \{1, 2, 4\}, \{1, 2, 5\}, \{1, 2, 6\}, \{1, 3, 4\}, \{1, 3, 5\}, \{1,
   4, 6\}, \{1, 5, 6\}, \{2, 3, 4\},  \{2, 3, 5\},\\[-0.2cm]
   & \scriptsize \{2, 4, 6\}, \{2, 5, 6\}\}
\\[-0.1cm]
8 & \scriptsize  \{\{1, 2, 3\}, \{1, 2, 4\}, \{1, 2, 5\}, \{1, 2, 6\}, \{1, 3, 4\}, \{1, 3, 5\}, \{1,
   4, 6\}, \{1, 5, 6\}, \{2, 3, 4\},  \{2, 3, 5\},\\[-0.2cm]
   & \scriptsize \{2, 4, 6\}, \{3, 5, 6\}\}
\\[-0.1cm]
9 & \scriptsize  \{\{1, 2, 3\}, \{1, 2, 4\}, \{1, 2, 5\}, \{1, 2, 6\}, \{1, 3, 4\}, \{1, 4, 5\}, \{1,
   4, 6\}, \{2, 3, 4\}, \{2, 4, 5\},  \{2, 5, 6\}, \\[-0.2cm]
   & \scriptsize\{3, 4, 6\}, \{3, 5, 6\}\}
\\[-0.1cm]
10 & \scriptsize  \{\{1, 2, 3\}, \{1, 2, 4\}, \{1, 2, 5\}, \{1, 2, 6\}, \{1, 3, 4\}, \{1, 4, 5\}, \{1,
   5, 6\}, \{2, 3, 4\}, \{2, 4, 5\}, \{2, 5, 6\},\\[-0.2cm]
   & \scriptsize \{3, 4, 6\}, \{3, 5, 6\}\}
\\[-0.1cm]
11 & \scriptsize \{\{1, 2, 3\}, \{1, 2, 4\}, \{1, 2, 5\}, \{1, 2, 6\}, \{1, 3, 4\}, \{1, 3, 5\}, \{1,
   4, 5\}, \{1, 4, 6\}, \{2, 3, 4\}, \{2, 3, 5\}, \\[-0.2cm]
   & \scriptsize\{2, 4, 5\}, \{2, 4, 6\}, \{3, 
  5, 6\}\}
\\[-0.1cm]
12 & \scriptsize  \{\{1, 2, 3\}, \{1, 2, 4\}, \{1, 2, 5\}, \{1, 2, 6\}, \{1, 3, 4\}, \{1, 3, 5\}, \{1,
   4, 6\}, \{1, 5, 6\}, \{2, 3, 4\}, \{2, 4, 6\},\\[-0.2cm]
   & \scriptsize \{2, 5, 6\}, \{3, 4, 5\}, \{3, 
  5, 6\}\}
\\[-0.1cm]
13 & \scriptsize  \{\{1, 2, 3\}, \{1, 2, 4\}, \{1, 2, 5\}, \{1, 2, 6\}, \{1, 3, 4\}, \{1, 3, 5\}, \{1,
   3, 6\}, \{1, 4, 5\}, \{2, 3, 4\}, \{2, 3, 6\}, \\[-0.2cm]
   & \scriptsize\{2, 4, 5\}, \{2, 4, 6\}, \{3, 
  4, 5\}, \{3, 4, 6\}\}
\\[-0.1cm]
14 &  \scriptsize \{\{1, 2, 3\}, \{1, 2, 4\}, \{1, 2, 5\}, \{1, 2, 6\}, \{1, 3, 4\}, \{1, 3, 5\}, \{1,
   4, 5\}, \{1, 4, 6\}, \{2, 3, 4\}, \{2, 3, 5\},\\[-0.2cm]
   & \scriptsize \{2, 4, 5\}, \{2, 4, 6\}, \{3, 
  4, 5\}, \{3, 5, 6\}\}
\\[-0.1cm]
15 & \scriptsize  \{\{1, 2, 3\}, \{1, 2, 4\}, \{1, 2, 5\}, \{1, 2, 6\}, \{1, 3, 4\}, \{1, 3, 5\}, \{1,
   4, 5\}, \{1, 4, 6\}, \{2, 3, 4\}, \{2, 3, 5\},\\[-0.2cm]
   & \scriptsize \{2, 4, 6\}, \{2, 5, 6\}, \{3, 
  4, 5\}, \{3, 5, 6\}\}
\\[-0.1cm]
16 & \scriptsize  \{\{1, 2, 3\}, \{1, 2, 4\}, \{1, 2, 5\}, \{1, 2, 6\}, \{1, 3, 4\}, \{1, 3, 5\}, \{1,
   4, 5\}, \{1, 4, 6\}, \{2, 3, 4\},  \{2, 3, 5\},\\[-0.2cm]
   & \scriptsize\{2, 5, 6\}, \{3, 4, 5\}, \{3, 
  4, 6\}, \{3, 5, 6\}\}
\\[-0.1cm]
17 & \scriptsize  \{\{1, 2, 3\}, \{1, 2, 4\}, \{1, 2, 5\}, \{1, 2, 6\}, \{1, 3, 4\}, \{1, 3, 5\}, \{1,
   4, 6\}, \{1, 5, 6\}, \{2, 3, 4\}, \{2, 3, 5\}, \\[-0.2cm]
   & \scriptsize\{2, 4, 6\}, \{2, 5, 6\}, \{3, 
  4, 5\}, \{3, 4, 6\}\}
\\[-0.1cm]
18 & \scriptsize \{\{1, 2, 4\}, \{1, 2, 5\}, \{1, 2, 6\}, \{1, 3, 4\}, \{1, 3, 5\}, \{1, 4, 5\}, \{1,
   4, 6\}, \{2, 3, 5\}, \{2, 3, 6\},  \{2, 4, 6\}, \\[-0.2cm]
   & \scriptsize\{2, 5, 6\}, \{3, 4, 5\}, \{3, 
  4, 6\}, \{3, 5, 6\}\}
\\[-0.1cm]
19 &  \scriptsize \{\{1, 2, 3\}, \{1, 2, 4\}, \{1, 2, 5\}, \{1, 2, 6\}, \{1, 3, 4\}, \{1, 3, 5\}, \{1,
   4, 5\}, \{1, 4, 6\}, \{2, 3, 4\},  \{2, 3, 5\}, \\[-0.2cm]
   & \scriptsize\{2, 4, 5\}, \{2, 5, 6\}, \{3, 
  4, 5\}, \{3, 4, 6\}, \{3, 5, 6\}\}
\\[-0.1cm]
20 & \scriptsize  \{\{1, 2, 3\}, \{1, 2, 4\}, \{1, 2, 5\}, \{1, 2, 6\}, \{1, 3, 4\}, \{1, 3, 5\}, \{1,
   4, 5\}, \{1, 4, 6\}, \{2, 3, 5\},  \{2, 3, 6\},\\[-0.2cm]
   & \scriptsize \{2, 4, 5\}, \{2, 4, 6\}, \{3, 
  4, 5\}, \{3, 4, 6\}, \{3, 5, 6\}\}
\\[-0.1cm]
21 &  \scriptsize \{\{1, 2, 3\}, \{1, 2, 4\}, \{1, 2, 5\}, \{1, 2, 6\}, \{1, 3, 4\}, \{1, 3, 5\}, \{1,
   4, 5\}, \{1, 4, 6\}, \{2, 3, 5\},  \{2, 3, 6\}, \\[-0.2cm]
   & \scriptsize\{2, 4, 6\}, \{2, 5, 6\}, \{3, 
  4, 5\}, \{3, 4, 6\}, \{3, 5, 6\}\}
\\[-0.1cm]
22 &  \scriptsize \{\{1, 2, 3\}, \{1, 2, 4\}, \{1, 2, 5\}, \{1, 3, 4\}, \{1, 3, 6\}, \{1, 4, 5\}, \{1,
   4, 6\}, \{1, 5, 6\}, \{2, 3, 5\},  \{2, 3, 6\}, \\[-0.2cm]
   & \scriptsize\{2, 4, 5\}, \{2, 4, 6\}, \{2, 
  5, 6\}, \{3, 4, 6\}, \{3, 5, 6\}\}
\\[-0.1cm]
23 & \scriptsize \{\{1, 2, 3\}, \{1, 2, 4\}, \{1, 2, 5\}, \{1, 3, 4\}, \{1, 3, 6\}, \{1, 4, 5\}, \{1,
   4, 6\}, \{1, 5, 6\}, \{2, 3, 5\},  \{2, 3, 6\},\\[-0.2cm]
   & \scriptsize \{2, 4, 5\}, \{2, 4, 6\}, \{2, 
  5, 6\}, \{3, 4, 5\}, \{3, 4, 6\}, \{3, 5, 6\}\}
\\[-0.1cm]
24 &  \scriptsize \{\{1, 2, 4\}, \{1, 2, 5\}, \{1, 2, 6\}, \{1, 3, 4\}, \{1, 3, 5\}, \{1, 3, 6\}, \{1,
    4, 5\}, \{1, 4, 6\}, \{2, 3, 5\}, \{2, 3, 6\},\\[-0.2cm]
   & \scriptsize \{2, 4, 5\}, \{2, 4, 6\}, \{2, 
   5, 6\}, \{3, 4, 5\}, \{3, 4, 6\}, \{3, 5, 6\}\}
\\
\hline
\end{tabular}
\end{center}
\caption{The 24 classes of three-uniform six-qubit hypergraph states, where all the reduced
single particle density matrices are maximally mixed.
}
\label{6qbtable}
\end{table}

\section*{Acknowledgments}
We thank  Markus Grassl, Marcus Huber, Laura Man\v{c}inska and Andreas Winter for discussions.
This work  
has been supported by the EU (Marie Curie CIG 293993/ENFOQI),
the BMBF (Chist-Era Project QUASAR), the FQXi Fund
(Silicon Valley Community Foundation), the Austrian Science Fund 
(FWF) Grant No.~Y535-N16, the program ``Science without borders'' from
the Brazilian Agency CAPES, and the DFG.

\section{Appendix}
In Table \ref{6qbtable} we present the 24 classes of three-uniform six-qubit 
hypergraph states, where all single-qubit reduced density matrices are maximally
mixed.

\section*{References}

\end{document}